\newcommand{\mc}{\mathcal}
\newcommand{\R}{\mathbb{R}}
\newcommand{\dx}[1]{\mathrm{d}#1}
\DeclareMathOperator*{\argsup}{\arg \sup}
\DeclareMathOperator*{\arginf}{\arg \inf}
\theoremstyle{plain}
\newtheorem{theorem}{Theorem}
\newtheorem{lemma}{Lemma}
\theoremstyle{definition}
\newtheorem{definition}{Definition}
\newtheorem{remark}{Remark}
\theoremstyle{remark}
\newtheorem*{claim*}{Claim}
\tikzstyle{Plain}=[fill=white, draw=black, shape=circle]
\tikzstyle{Square}=[fill=white, draw=black, shape=rectangle]
\tikzstyle{Both}=[draw=black, fill=none, <->]
\tikzstyle{Directed}=[->]
\tikzstyle{Directed snake}=[->, draw=black, decorate, decoration=snake]
\tikzstyle{Red Dashed}=[-, dashed, draw=red]
\tikzstyle{Red directed}=[->, draw=red]
\tikzstyle{Red directed snake}=[->, draw=red, decorate, decoration=snake]
\tikzstyle{Blue directed}=[->, draw=blue]
\tikzstyle{Dashed}=[-, dashed]
\tikzstyle{Red dashed directed}=[->, dashed, draw=red]
\title{Non-Obvious Manipulability for Single-Parameter Agents and Bilateral Trade}
\author{
    Thomas Archbold \quad
    Bart de Keijzer \quad
    Carmine Ventre
}
\date{February 15, 2022}
\affil{
    King's College London \\
    \texttt{
    \string{%
    \href{mailto:thomas.archbold@kcl.ac.uk}{thomas.archbold},%
    \href{mailto:bart.de_keijzer@kcl.ac.uk}{bart.de\_keijzer},%
    \href{mailto:carmine.ventre@kcl.ac.uk}{carmine.ventre}%
    \string}@kcl.ac.uk} \\
}
\begin{document}

\maketitle 

\begin{abstract}
    A recent line of work in mechanism design has focused on guaranteeing incentive compatibility for agents without  contingent reasoning skills: \emph{obviously strategyproof  mechanisms} guarantee that it is ``obvious'' for these imperfectly rational agents to behave honestly, whereas \emph{non-obviously manipulable} (NOM) mechanisms take a more optimistic view and assume that these agents will only misbehave when it is ``obvious'' for them to do so.
    Technically, obviousness requires comparing certain extrema (defined over the actions of the other agents) of an agent's utilities for honest behaviour against dishonest behaviour. 

    We present a technique for designing NOM mechanisms in settings with monetary transfers based on cycle monotonicity, which allows us to disentangle the specification of the mechanism's allocation from its payments.
    By leveraging this framework, we completely characterise both allocation and payment functions of NOM mechanisms for single-parameter agents.
    We then look at the classical setting of bilateral trade and study how much subsidy, if any, is needed to guarantee NOM, efficiency, and individual rationality.
    We prove a stark dichotomy: no finite subsidy suffices if agents look only at best-case extremes, whereas no subsidy at all is required when agents focus on worst-case extremes.
    We conclude the paper by characterising the NOM mechanisms that require no subsidies whilst satisfying individual rationality.
\end{abstract}

\section{Introduction}

The prevailing solution concept in mechanism design is dominant-strategy incentive-compatibility, otherwise known as strategyproofness, which stipulates that for any joint strategy of the other agents a given agent always weakly prefers to act truthfully than dishonestly.
A classic example of a strategyproof mechanism is the second-price sealed-bid auction, in which it is always in an agent's best interests to submit her true valuation for the item to the auctioneer.
While this is appealing from a theoretical standpoint it relies on the assumption that the players of the game act perfectly rationally, meaning they are able to correctly reason about the behaviour of the other players and solve the resulting optimisation problem to identify an optimal strategy.
This may place a high cognitive or temporal burden on the player and it can be unrealistic to assume this is how they act in the real world.

Empirical evidence in fact shows that people facing mechanisms that are not strategyproof might fail to understand that lying is beneficial.
Examples include mechanisms both with and without monetary transfers.
The Deferred Acceptance Algorithm, used in many countries to allocate doctors to hospitals, is known to perform well in practice and produce stable outcomes, even though hospitals could lie and hire better candidates \cite{Roth1991}.
Uniform-price auctions are a natural generalisation of second-price auctions to the multi-unit case: with a supply of $k$, they charge the winners the $(k+1)$-th highest bid for each unit received.
Variants of uniform-price auctions are used by search engines to charge for sponsored ad slots and governments to sell bonds.
These auctions are famously not strategyproof and yet there is little evidence that bidders cheat in practice \cite{Kastl2011}.
On the other hand, there are manipulable mechanisms where people can easily figure out how to lie profitably: the Boston Mechanism for school choice is known to be manipulable \cite{Pathak2008} and empirical data shows evidence of misreports \cite{Dur2018}.
Similarly, there is evidence of bid shading in pay-as-bid multi-unit auctions \cite{Hortacsu2010}. 

Motivated by these observations, \cite{Troyan2020} study \emph{non-obvious manipulability}, an alternative solution concept to strategyproofness for cognitively limited agents.
Under this relaxation, simply referred to as NOM, a mechanism is allowed to be vulnerable to certain types of manipulations which are harder to identify by agents with bounded rationality.
Instead of requiring the truthful action for each agent $i$ to be a dominant strategy at each possible interaction with the mechanism, under non-obvious manipulability we need only to compare best-case utilities for truthful outcomes (i.e., the maximum utility, taken over the reports of the other agents, for $i$ when she is truthful) to best-case utilities for dishonest outcomes, and worst-case utilities for truthful outcomes to worst-case utilities for dishonest outcomes.
This is justified under the assumption that cognitively-limited agents can only identify this type of \emph{obvious} manipulation where the benefits occur at the extremes of her utility function.

In this paper we study NOM mechanisms that use monetary transfers.
We wish to: (i) understand the restrictions NOM imposes on the social choice function; (ii) characterise both the allocation functions and payments in NOM mechanisms for the large class of single-parameter agents; and (iii) assess the power of NOM mechanisms for two-sided markets.
We develop both conceptual and technical tools to obtain answers to each of these desiderata.

\bigbreak

\textbf{Our Contribution.}
Our goal is to understand how permissive NOM is as a solution concept for allocation functions.
NOM requires truthtelling to yield a greater utility than misreporting for every player under the best and worst outcomes of the mechanism, taken over all actions of the other players.
Therefore the constraints imposed by NOM \emph{depend on} both the allocation and payment function since they define these maximum and minimum utilities, and the utility of every other outcome must lie between these extremes.
The first contribution of this work is to abstract this dependency using what we call \emph{labellings} and use these to extend the cycle monotonicity technique to handle NOM direct-revelation mechanisms.
Informally a labelling is a choice of bid profiles designating the best and worst outcomes for a given player that the mechanism will define through its allocation and payments.
These induce two types of constraints: incentive-compatibility constraints, which enforce the NOM property where the extreme utilities when lying are no better than the extreme utilities from truthtelling, and labelling-induced constraints, which ensure the validity of the labels.
Given an allocation function $f$ and labelling $\lambda$ for a given agent $i$ we can then define a graph $\mc{G}_{i,f}^\lambda$ representing the constraints imposed by the mechanism and show that the existence of payments satisfying these constraints is related to cycles in this graph.

\begin{center}
    \begin{minipage}[c]{0.9\linewidth}
        \textbf{Main Result 1 (informal).} \textit{An allocation function $f$ is NOM implementable if and only if there exists a labelling $\lambda$ such that $\mc{G}_{i,f}^\lambda$ has no negative-weight cycles.}
    \end{minipage}
\end{center}
This provides a useful tool for studying the allocation functions for which we can define payments leading to NOM mechanisms.
We apply this to the well-studied setting of single-parameter agents where each agent's type can be expressed as a single number, and their utility for an allocation is simply the product of their type with the amount they are allocated.
Negative cycles exist in the graph $\mc{G}_{i,f}^\lambda$ when the labels are antimonotone, roughly meaning that for two types $b_i$ and $b_i'$ with $b_i < b_i'$ the agent is allocated more for the label corresponding to $b_i$ than it does for the label corresponding to $b_i'$.
We then prove that a monotone labelling guarantees no negative cycles and that such a labelling is possible whenever the image of $f_i$ is rich enough.
Roughly speaking $f$ is \emph{overlapping} if for each agent $i$ and pair of types $b_i, b_i'$ the sets $\{ f_i(b_i, b_{-i}) \}_{b_{-i}}$ and $\{ f_i(b_i', b_{-i}) \}_{b_{-i}}$ have a non-empty intersection.

\begin{center}
    \begin{minipage}[c]{0.9\linewidth}
        \textbf{Main Result 2 (informal).} \textit{An allocation function $f$ is NOM-implementable for single-parameter agents if and only if $f$ is overlapping.}
    \end{minipage}
\end{center}
Our second contribution proves that as long as the labels define a a monotone restriction of the allocation function then it is possible to define a NOM payment scheme.
This uncovers a connection with strategyproofness for single-parameter agents, where the allocation must be monotone for each fixed $b_{-i}$, and when this is the case it is known from \cite{ArcherTardos2001} the form the payments must take.
We observe that we can take the same payments along the monotone restriction defined by the labels, while we can easily characterise the payments associated with the remaining bid profiles.
This gives a complete picture on the power of NOM mechanisms for single-parameter agents.

We conclude by applying our findings to NOM mechanisms for two-sided markets where both sides of the market are strategic, a setting which is widespread but notoriously too complex for strategyproofness.
We focus on bilateral trade where a single item may be sold from a seller to a buyer.
We want to guarantee efficiency, where trade occurs only when the buyer's valuation is at least the seller's production cost; weak budget balance (WBB), for which we never pay the seller more than we charge the buyer; and individual rationality (IR), where the utilities of both buyer and seller are non-negative.
An impossibility result from \cite{Troyan2020} tells us that NOM cannot guarantee these three properties simultaneously, so we investigate the extent to which a multiplicative subsidy of $\alpha$, which guarantees that we never pay the seller more than an $\alpha \ge 1$ factor of what we receive from the buyer (i.e., $\alpha$-WBB), allows us to obtain NOM, efficiency, and IR simultaneously.
\begin{center}
    \begin{minipage}[c]{0.9\linewidth}
        \textbf{Main Result 3 (informal).} \textit{Any NOM, IR and $\alpha$-WBB mechanism for bilateral trade has unbounded $\alpha$.}
    \end{minipage}
\end{center}
The approach developed in the first two main theorems can be applied to characterise BNOM and WNOM independently.
For our third main result we first characterise the monotone restrictions of a given allocation function that will guarantee efficiency, IR, and either BNOM or WNOM.
For the former the unbounded $\alpha$ is implied by the incentive-compatibility constraints: using the explicit payment formulae we show that the buyer must receive the item for free while the seller must receive some positive payment in each of their respective best case profiles.
For the latter it is implied by the labelling validity constraints.

We then look beyond using monotone restrictions of the allocation function to construct NOM mechanisms where the labelled profiles need not appear on a ``single line''.
Can we be more flexible in our labelling scheme to derive different payments that would satisfy IR and finite subsidies for an allocation function that is overlapping and efficient?
The answer is surprising: even non-single line BNOM mechanisms require infinite subsidies while non-single line WNOM mechanisms can guarantee WBB (i.e., no subsidies).
There exists in fact a definition of WNOM labels that implement first-price auctions.
We may also strengthen the notion of incentive-compatibility on either side of the market to strategyproofness -- for example, the buyer is strategyproof and the seller WNOM -- by combining a first-price auction on one side with a second-price auction on the other.
This leaves the spread to the market designer, whereas a fully WNOM market would be budget balanced.
This setting may be of interest for some classes of double-sided markets where the rationality is asymmetric and depends on the role played in the market.

We conclude by characterising individually rational NOM mechanisms that might be inefficient but require no subsidies.
Intuitively, this result says that there exist trading windows defined by payment thresholds that partition the domain of buyer and seller into three sets; bids guaranteeing that the trade occurs, bids guaranteeing that the trade will not occur, and bids where the decision to trade depends on the bid at the other side of the market.

\bigbreak

\textbf{Related Work.}
There has been a growing body of work in the algorithmic mechanism design literature that studies agents with bounded rationality.
\cite{Li2017} introduced the aforementioned notion of obvious strategproofness (OSP) to study settings in which agents lack contingent reasoning skills, and provides two characterisations of such mechanisms, one from the perspective of contingent reasoning and the other from that of commitment power.
While OSP mechanisms provide stronger incentive guarantees than strategyproof ones, they are harder to construct.
Where the revelation principle allows us to restrict our attention to direct mechanisms for strategyproofness, the same is not true for OSP: the extensive-form implementation of the mechanism is crucial to a mechanism's obvious strategyproofness.
The relationship between the extensive-form implementation and the allocation function of OSP mechanisms is studied in \cite{Ferraioli2018ESA}, where together with the introduction of the cycle monotonicity variant for OSP mechanisms with transfers, the authors give bounds on the approximation guarantee of these mechanisms for scheduling related machines.
Binary allocation problems are then studied in \cite{Ferraioli2019} first for single-parameter agents with small domains and more recently in \cite{Ferraioli2022WINE} for the general case.

OSP mechanisms in which the mechanism designer possesses some form of verification power are considered in  \cite{Ferraioli2021}, \cite{Ferraioli2018IJCAI}, and \cite{Kyropoulou2019}.
The work of \cite{deKeijzer2020} explores OSP in the multiparameter setting, specifically OSP combinatorial auctions with single-minded bidders where each agent's desired bundle is unknown.
A relaxation of OSP interpolating between the lack of contingent reasoning modelled by OSP and the full rationality of strategyproofness is introduced in \cite{Ferraioli2022TOCS}, where they show that increasing the ``degree of rationality'' of the agents only slightly improves the approximation guarantees of incentive-compatible mechanisms for machine scheduling and facility location.

\cite{Troyan2020} consider agents with bounded rationality from the opposite perspective and introduce non-obvious manipulability (NOM) to study the types of manipulations available to a cognitively limited agent.
In contrast to OSP, where the worst-case truthful outcome must be at least as good as the best-case dishonest outcome, NOM stipulates that both the best- and worst-case truthful outcomes are no worse than the best- and worst-case dishonest outcomes, respectively.
They provide a characterisation for direct-revelation NOM mechanisms and classify existing mechanisms as either obviously manipulable (OM) or NOM in the settings of school choice, two-sided matching, and auctions.
They show that a manipulation is obvious if and only if it can be recognised by a cognitively limited agent.
They go on to characterise the class of NOM matching mechanisms for settings subject to both one-sided and two-sided manipulations, then show that for generalisations of first-price and second-price auctions, the former is OM while the latter is NOM.
Finally they show that every efficient, IR, WBB mechanism for bilateral trade is OM.

Several papers build directly on Troyan and Morrill's work.
\cite{Aziz2020} study NOM voting rules and provide conditions under which certain classes of rules are NOM, and this may depend on the number of outcomes relative to the number of voters.
They also provide algorithms to compute such manipulations using polynomial-time reductions to the unweighted coalitional manipulation problem, which yield polynomial-time algorithms under the $k$-approval voting rule.
\cite{Ortega2019} study indirect NOM mechanisms for cake-cutting and show that relaxing strategyproofness to NOM resolves the conflict between truthtelling and fairness: unlike strategyproofness, NOM is compatible with proportionality.
Like \cite{Li2017} they note how incentive properties may vary between theoretically equivalent mechanisms for agents with bounded rationality.
\cite{Psomas2022} study fairly allocating indivisible goods to agents with additive valuations.
They show the existence of deterministic NOM mechanisms which achieve envy-freeness up to one good (EF1), and highlight a conflict for NOM under different objective functions -- they provide a social welfare maximising NOM mechanism for $n \ge 3$ agents and in contrast show that any optimal mechanism for egalitarian or Nash welfare is obviously manipulable.
They also provide an efficiency-preserving black box reduction from designing NOM and EF1 mechanisms to designing EF1 algorithms.

\section{Preliminaries}

\label{sec:preliminaries}

For a natural number $k$ we denote by $[k]$ the set $\{ 1, 2, \ldots, k \}$.
We consider mechanism design settings with the possibility of monetary transfers.
There is a set of agents $[n]$, where each agent $i$ has some type $t_i$ that comes from a set $D_i$ of possible types (also called agent $i$'s domain).
We let $D = \times_{i \in [n]} D_i$ be the set of type profiles.
For a set of possible outcomes $O$ an agent's type describes the utility she receives from each outcome $o \in O$.
We can therefore think of player $i$'s type as a function $t_i : O \to \mathbb{R}$ mapping outcomes to numbers.
For a type profile $t = (t_1,t_2,\ldots,t_n)$, player $i$, and type $b_i \in D_i$, we use the standard notation $(b_i,t_{-i})$ to denote the type profile obtained from $t$ by replacing $t_i$ with $b_i$.
Furthermore we define $D_{-i} = \times_{j \in [n] \setminus \{i\}} D_j$.

A mechanism collects a type profile from the agents and uses this to return an outcome from $O$ and a vector of payments.
We focus on direct-revelation mechanisms, as in \cite{Troyan2020}, whereby agents report their types directly to mechanism.
We view a mechanism as a function $M : D \to O \times \R^n$ mapping type profiles to outcome-payment vector pairs, where $p_i$ describes the payment made to agent $i$.
Agents are assumed to maximise their \emph{utility} and are equipped with standard quasi-linear utility functions so that, on output $(o,p)$ of the mechanism, agent $i$ derives utility $t_i(o) + p_i$.
We overload notation and denote the utility of agent $i$ under this output as $t_i(o,p)$.
A \emph{social choice function} $f : D \to O$ is a function that maps type profiles to outcomes, and we say that a mechanism $M$ \emph{implements} a social choice function $f$ if and only if for all $t \in D$ the outcome returned by $M(t)$ is equal to $f(t)$.
Note that agent types are private information, hence each agent $i$ may misreport her type as $b_i \neq t_i$ to the mechanism.
We will therefore refer to the reports as \emph{bids} and the profile submitted to the mechanism as a \emph{bid profile}.
We use payments to realign the agents' incentives to satisfy the NOM property (defined precisely later in this section).
In this paper we focus on designing NOM mechanisms that implement a given social choice function and therefore, given a social choice function $f$, we may view our mechanism as the tuple $M = (f,p)$ where the outcome returned by $M$ on input $b$ is equal to $f(b)$ for all bid profiles $b$, and $p : D \to \mathbb{R}^n$ is a rule defining the payments made to each agent for each bid profile $b$.
Therefore our objective, given $f$, is to find payment functions $p$ such that the resulting mechanism $M = (f,p)$ is NOM.

From \Cref{sec:single-parameter} we focus on a specific class of social choice functions known as \emph{allocation functions}.
Here we assume outcomes to be real-numbered vectors of length $n$ and refer to them as \emph{allocations}, and assume an agent's utility for an allocation $a$ depends only on the $i$th coordinate of $a$.
This is done purely for interpretational purposes, since we focus on auction settings in which the mechanism assigns an amount of some object to each agent and an agent's type describes the utility she gets from that allocation.

A familiar solution concept in mechanism design is strategyproofness: a mechanism $M$ is \emph{strategyproof} if and only if $t_i(M(t_i,b_{-i})) \ge t_i(M(b_i,b_{-i}))$ for all $t_i,b_i \in D_i$, all $b_{-i} \in D_{-i}$, and all players $i$.
Non-obvious manipulability, on the other hand, requires agents only to compare the extremes of their utility function.
Informally stated it says that a manipulation is \emph{obvious} if either the best-case dishonest outcome is strictly greater than the best-case truthful outcome or the worst-case dishonest outcome is strictly greater than the worst-case truthful outcome.
The best and worst cases are defined over all bid profiles of the other players excluding $i$.
As per \cite{Troyan2020} a direct-revelation mechanism is \emph{not obviously manipulable (NOM)} if both of the following two properties hold:
\begin{gather}
    \label{eq:direct-revelation-BNOM}
    \sup_{b_{-i}} \{ t_i(M(t_i,b_{-i})) \} \ge \sup_{b_{-i}} \{ t_i(M(b_i,b_{-i})) \}, \\
    \label{eq:direct-revelation-WNOM}
    \inf_{b_{-i}} \{ t_i(M(t_i,b_{-i})) \} \ge \inf_{b_{-i}} \{ t_i(M(b_i,b_{-i})) \}
\end{gather}
for every $t_i,b_i \in D_i$ and for every player $i$.
Note that the $b_{-i}$ are not necessarily the same on each side of the inequalities.
If \eqref{eq:direct-revelation-BNOM} holds then $M$ is \emph{best-case not obviously manipulable (BNOM)} and if \eqref{eq:direct-revelation-WNOM} holds then $M$ is \emph{worst-case not obviously manipulable (WNOM)}.
If either inequality is violated for some $t_i,b_i \in D_i$ then $b_i$ is an \emph{obvious manipulation} of $M$.
If $M = (f,p)$ implements $f$ and we can define payments $p$ such that $M$ is BNOM (respectively, WNOM) then we say that $f$ is \emph{BNOM-implementable} (respectively, \emph{WNOM-implementable}), and if $f$ is both BNOM- and WNOM-implementable then $f$ is \emph{NOM-implementable}.

In \Cref{sec:single-parameter,sec:bilateral-trade} we consider several additional properties.
A mechanism $M$ is \emph{individually rational (IR)} if $t_i(M(t_i,b_{-i})) \ge 0$ for all $t_i \in D_i$, all $b_{-i} \in D_{-i}$, and for each player $i$, and it makes \emph{no positive transfers (NPT)} if $p_i(b) \le 0$ for each player $i$ and each $b \in D$.
In \Cref{sec:bilateral-trade} we also study efficiency and budget-balance in the context of bilateral trade, which we will define when needed.

In order to refer to specific types of some player $i$ we enumerate her domain as $D_i = \{ t_1, t_2, \ldots, t_d \}$.
In \Cref{sec:single-parameter} we focus on single-parameter agents where player types are treated as scalars and we assume that $t_1 < t_2 < \ldots < t_d$.
We conclude this section with the following remark.
%
\begin{remark}
    \label{thm:SP-implies-NOM}
    If $M$ is strategyproof then it is NOM, however the reverse is not necessarily true.
\end{remark}

\begin{proof}
    We will prove the claim by showing that if $M$ is strategyproof then it is both BNOM and WNOM, starting with the former.
    Fix player $i$ with type $t_i$.
	By strategyproofness we have $t_i(M(t_i,b_{-i})) \ge t_i(M(b_i,b_{-i}))$ for all $t_i,b_i \in D_i$ and for all $b_{-i}$.
	Assume for contradiction that $M$ is obviously manipulable in the best case and therefore $\sup_{b_{-i}} \{ t_i(M(b_i,b_{-i})) \} > \sup_{b_{-i}} \{ t_i(M(t_i,b_{-i})) \}$ for some manipulation $b_i$.
	Let $\hat{x}_{-i} \in \arg \sup \{ t_i(M(b_i,b_{-i}))$ denote the bid profile leading to $i$'s best-case dishonest utility.
	Then
    \begin{equation*}
        t_i(M(t_i,\hat{b}_{-i})) \ge
        t_i(M(b_i,\hat{b}_{-i})) >
        \sup_{b_{-i}} \{ t_i(M(t_i,b_{-i})) \}
    \end{equation*}
    and thus $t_i(M(t_i,\hat{b}_{-i}))$ is strictly greater than the maximum utility $i$ gets when bidding $t_i$ truthfully, a contradiction.
    A similar argument shows that $M$ is WNOM.
    Let $\hat{b}_{-i} \in \arg \inf \{ t_i(M(b_i,b_{-i})) \}$ be the bid profile leading to $i$'s worst-case dishonest utility.
    So
    \begin{equation*}
        \inf_{b_{-i}} \{ t_i(M(b_i,b_{-i})) \} >
        t_i(M(t_i,\hat{b}_{-i})) \ge
        t_i(M(x_i,\hat{b}_{-i}))
    \end{equation*}
    meaning $t_i(M(b_i,\hat{b}_{-i}))$ is strictly less than the worst-case dishonest utility of player $i$.
    Thus strategyproof $M$ is also NOM.
\end{proof}
    
This says that there is some additional flexibility that we may be able to exploit when designing incentive-compatible mechanisms for NOM versus strategyproofness.
In the following section we formalise this flexibility in the form of profile labellings and use it to derive a useful technique for designing NOM mechanisms.

\section{Profile Labellings}

\label{sec:labellings}

While strategyproofness requires satisfying a constraint on each pair of bid profiles $(t_i,b_{-i})$ and $(b_i,b_{-i})$ for each pair of types $t_i,b_i \in D_i$ and each $b_{-i} \in D_{-i}$, for NOM we need only compare the extremes of $i$'s utility function.
Since the social choice function $f$ is given then we may design the payments $p$ for some mechanism $M=(f,p)$ to define these extremes and thus effectively select which bid profiles to compare for incentive-compatibility.
We then only need to satisfy the respective BNOM and WNOM constraints for these chosen profiles, in addition to the implicit constraints imposed by this ordering on the profiles.
We model this with \emph{profile labellings}, which allow us to designate the type profiles leading to an agent's highest and lowest utilities when interacting with the mechanism, based on her true type and the type she reports.


\begin{definition}[Profile labelling]
    Fix player $i$ with domain $D_i$ and let $d = |D_i|$. A \emph{best-case labelling $\beta$} and \emph{worst-case labelling $\omega$} of mechanism $M$ for player $i$ are matrices $\beta, \omega \in |D_{-i}|^{d \times d}$ such that
    \begin{gather*}
        \beta_{jk} \in \argsup_{b_{-i} \in D_{-i}} \{ t_j(M(t_k,b_{-i})) \}, \\
        \omega_{jk} \in \arginf_{b_{-i} \in D_{-i}} \{ t_j(M(t_k,b_{-i})) \}.
    \end{gather*}
\end{definition}
The entry in row $j$ and column $k$ of $\beta$ (respectively, $\omega$) therefore represents the partial bid profile that, when submitted to mechanism $M$ along with player $i$'s bid $t_k$, results in $i$'s greatest (respectively, least) utility when she has type $t_j$.
Often we will use $\lambda$ to denote either a best- or worst-case labelling of some mechanism.
For a labelled profile $\lambda_{jk}$ for player $i$, given that we can infer $i$'s bid by looking at the second subscript we will use $\lambda_{jk}$ to refer to the (full) bid profile $(t_k,\lambda_{jk})$ for brevity.
When dealing with labellings for multiple players as in \Cref{sec:bilateral-trade} we will use a superscript to differentiate between the two (i.e., $\lambda^i$ and $\lambda^j$ for players $i$ and $j$, respectively).

%

Profile labellings induce two types of constraint.
The \emph{labelling constraints} induced by $\lambda \in \{ \beta, \omega \}$ ensure that a labelled bid profile leads to the extreme values of $i$'s utility function when she has type $t_j$ and bids type $t_k$:
\begin{align}
    \label{eq:best-labelling-constraints}
    t_j(M(\beta_{jk})) & \ge t_j(M(t_k,b_{-i})) \quad \text{for all $b_{-i} \in D_{-i}$}, \\
    \label{eq:worst-labelling-constraints}
    t_j(M(\omega_{jk})) & \le t_j(M(t_k,b_{-i})) \quad \text{for all $b_{-i} \in D_{-i}$},
\end{align}
for all $t_j,t_k \in D_i$.
Meanwhile the \emph{incentive-compatibility constraints} ensure that the extreme values of $i$'s utility function when she reports truthfully to the mechanism are no worse than the extreme values of her utility function when she misreports her type.
Since the label $\lambda_{jk}$ denotes a bid profile where $i$ has type $t_j$ and bids type $t_k$ then for both best- and worst-case labellings we have:
\begin{equation}
    \label{eq:ic-constraints}
    t_j(M(\lambda_{jj})) \ge t_j(M(\lambda_{jk})) \quad \text{for all $t_j,t_k \in D_i$}.
\end{equation}
It is worth emphasising that the social choice function $f$ is fixed and that we will use the profile labellings to decide whether payments exist such that the resulting mechanism is NOM.
Not every labelling will admit such payments since the above constraints may be unsatisfiable.

We use the labellings to define a graph in order to characterise the class of NOM-implementable social choice functions.
This is achieved using the cyclic monotonicity technique in a similar manner to \cite{Rochet1987}, \cite{Lavi2009}, and \cite{Ventre2014}, among others.
For some social choice function $f$ with labelling $\lambda \in \{ \beta, \omega \}$ for player $i$ we construct the weighted directed multigraph $\mc{G}_{i,f}^\lambda = (V, E^\lambda, w)$ whose node set $V = D$ is the domain of $f$ and whose edge set encodes the constraints imposed by a NOM mechanism for $f$.
The precise edges in $E^\lambda$ will vary depending on whether $\lambda$ is a best- or worst-case labelling.
We annotate each edge of the graph with a type from $i$'s domain and for two nodes $x$ and $y$ connected by an edge annotated with type $t$ we write $(x,y;t)$, or equivalently $x \rightarrow^t y$.
The weight of this edge is $w(x,y;t) = t(f(x)) - t(f(y))$.
The weight $w(C)$ of a cycle $C$ is simply the sum of edge weights on each edge in the cycle.

The edges of the graph represent the constraints in \eqref{eq:best-labelling-constraints}, \eqref{eq:worst-labelling-constraints}, and \eqref{eq:ic-constraints}.
For any labelling $\lambda$ the edge set $E^\lambda$ contains the edges $\{ \lambda_{jj} \rightarrow^{t_j} \lambda_{jk} \, : \, t_j,t_k \in D_i \}$ representing the incentive-compatibility constraints.
The labelling constraints for $\beta$ and $\omega$ are the reverse of each other and we have $\{ \beta_{jk} \rightarrow^{t_j} (t_k,b_{-i}) \, : \, t_j,t_k \in D_i, b_{-i} \in D_{-i} \} \subset E^\beta$ and $\{ (t_k,b_{-i}) \rightarrow^{t_j} \omega_{jk} \, : \, t_j,t_k \in D_i, b_{-i} \in D_{-i} \} \subset E^\omega$.


For any labelling $\lambda$ of $f$ for player $i$, given some bid $t_k \in D_i$ every edge encoding a labelling constraint is contained within the subgraph of $\mc{G}_{i,f}^\lambda$ that consists only of nodes $(t_k,b_{-i})$, that is, profiles where $i$ bids $t_k$.
We refer to this subgraph as the ``$k$-island'' of $\mc{G}_{i,f}^\lambda$.
Thus all labelling constraint edges for profile $\lambda_{jk}$ are contained in the $k$-island of $\mc{G}_{i,f}^\lambda$, while incentive-compatibility constraint edges are those which for any two types $t_j, t_k \in D_i$ cross from the $j$-island to the $k$-island of $\mc{G}_{i,f}^\lambda$.
This graph allows us to use the cycle monotonicity technique to characterise the social choice functions that are NOM-implementable.

\begin{theorem}
    \label{thm:nom-implementable}
    Social choice function $f$ is BNOM-implementable (respectively, WNOM-implementable) if and only if for each agent $i$ there exists a best-case labelling $\beta$ (respectively, worst-case labelling $\omega$) such that $\mc{G}_{i,f}^\beta$ (respectively, $\mc{G}_{i,f}^\omega$) contains no negative weight cycles.
\end{theorem}

\begin{proof}
    We will prove the claim for WNOM and note that the proof for BNOM follows identical reasoning.
    
    ($\implies$) Suppose payments $p$ exist such that $(f,p)$ is WNOM. Fix player $i$ and assume for contradiction that every choice of $\omega$ induces a negative weight cycle in $G_{i,f}^\omega$. Let $C = \langle x^{(1)} \rightarrow^{t_1} x^{(2)} \rightarrow^{t_2} \ldots \rightarrow^{t_{k-1}} x^{(k)} \rightarrow^{t_k} x^{(1)} \rangle$ be such a cycle with $w(C) = w(x^{(1)},x^{(2)};t_1) + \ldots + w(x^{(k)},x^{(1)};t_k) < 0$. Since $f$ is WNOM-implementable then the inequality encoded by each edge is satisfied:
    \begin{gather*}
    t_1(f(x^{(1)}))+p_i(x^{(1)}) \ge t_1(f(x^{(2)}))+p_i(x^{(2)}) \\
    t_2(f(x^{(2)}))+p_i(x^{(2)}) \ge t_2(f(x^{(3)}))+p_i(x^{(3)}) \\
    \vdots \\
    t_k(f(x^{(k)}))+p_i(x^{(k)}) \ge t_k(f(x^{(1)}))+p_i(x^{(1)}) 
    \end{gather*}
    The sum of all the inequalities must also be satisfied, in which case the payments cancel out and we are left with:
    \[
    t_1(f(x^{(1)}))-t_1(f(x^{(2)})) + 
    t_2(f(x^{(2)}))-t_2(f(x^{(3)})) + 
    \ldots +
    t_k(f(x^{(k)}))-t_k(f(x^{(1)})) \ge 0,
    \]
    which is simply the expression for the sum of edge weights along $C$. Therefore we have $w(C) \ge 0$, a contradiction, so there must be a labelling $\omega$ such that $G_{i,f}^\omega$ has no negative weight cycles.

    ($\impliedby$) Let $\omega$ be a labelling such that $G_{i,f}^\omega$ has no negative weight cycles. Augment $G_{i,f}^\omega$ with an artificial node $\gamma$ and an edge $(\gamma,x;0)$ for each $x \in D$ with weight 0. Since $G_{i,f}^\omega$ contains no negative cycles and the addition of $\gamma$ creates no new cycles (since we only have outgoing edges from $\gamma$) then shortest paths are well-defined in this new graph. We will now set $p_i(x) = \text{SP}(\gamma,x)$, where $\text{SP}$ denotes the shortest path between two nodes in $G_{i,f}^\omega$. For any node $y$ connected to $x$ by an edge labelled by type $t$, by definition of the shortest path we have $\text{SP}(\gamma,y) \le \text{SP}(\gamma,x) + w(x,y;t)$. Thus $p_i(y) \le pi(x) + t(f(x)) - t(f(y))$ and so $t(f(x)) + p_i(x) \ge t(f(y)) + p_i(y)$ for any two profiles $x,y$ connected by an edge labelled by type $t$. It remains to show that, whenever such an edge appears as per our definition of $G_{i,f}^\omega$, the corresponding WNOM constraint that it represents is satisfied.
    
    \emph{Case 1: $t = x_i$.} In this case the edge $(x,y;t)$ either encodes an incentive-compatibility constraint or a truthful labelling constraint. Suppose first that it is the former. Then $x$ must be the profile in which $i$ gets her lowest utility when she bids $x_i$ truthfully, while $y$ must be the profile in which she gets her lowest utility when she has type $x_i$ and bids $y_i$ dishonestly. Thus $x_i(f(x)) + p_i(x) \ge x_i(f(y)) + p_i(y)$ and so her worst-case truthful outcome is no worse than her worst-case dishonest outcome when she has type $x_i$. Now suppose the edge encodes the latter constraint. Then we have $x=(x_i,x_{-i})$ and $y=(x_i,x_{-i})$, where $y$ designates the input yielding her her worst outcome when she has type $x_i$. Thus $i$ prefers $M(x)$ to $M(y)$ and we have $x_i(f(x)) + p_i(x) \ge x_i(f(y)) + p_i(y)$.
    
    \emph{Case 2: $t \neq x_i$.} In this case the edge $(x,y;t)$ encodes a labelling constraint for a dishonest interaction of $i$ with $M$. Now $y=(x_i,y_{-i})$ must be the profile in which $i$ receives her lowest utility when she has type $t$ and bids $x_i$ dishonestly. Thus she prefers the outcome when $M$ is given input $x=(x_i,x_{-i})$ and so $t(f(x)) + p_i(x) \ge t(f(y)) + p_i(y)$.
    
    In all cases whenever we have an edge $(x,y;t)$ all the constraints imposed by WNOM are satisfied.
    Therefore the mechanism $(f,p)$ is WNOM.
\end{proof}

We note that the result holds only for finite domains.
Certain cycles are guaranteed to exist in $\mc{G}_{i,f}^\beta$ and $\mc{G}_{i,f}^\omega$ regardless of the specific nodes selected as the labelled profiles.
These cycles fall into on of two types: cycles between edges which are entirely contained in a given $j$-island of $\mc{G}_{i,f}^\lambda$, and those that use edges that traverse different islands.
Intuitively the former class of cycles appears because given a bid $t_j$ of player $i$, each labelled profile for this bid and some true type from $i$'s domain must yield a higher (respectively, lower) utility from the mechanism than every other profile where $i$ bids $t_j$, including other profiles in the $j$-island labelled with a different true type.

We can select any number of true types from $D_i$ and form a cycle between their corresponding labelled profiles in the $j$-island.
Take any $p$ types $t_1,t_2,\ldots,t_p \in D_i$. 
Then no matter the specific labellings $\beta$ and $\omega$ there always exists the $p$-cycles
\begin{gather}
    \label{eq:bnom-guaranteed-island-cycles}
    C_p^\beta = \{  \beta_{k,j} \rightarrow^{t_k} \beta_{k+1,j}\, : \, k \in [p] \}, \\
    \label{eq:wnom-guaranteed-island-cycles}
    C_p^\omega = \{  \omega_{k-1,j} \rightarrow^{t_k} \omega_{k,j} \, : \, k \in [p] \},
\end{gather}
where the indices $-1$ and $p+1$ wrap around to $p$ and $1$ respectively.
Similarly for any $p$ types we can form a cycle that traverses the different islands of player $i$'s domain by alternating between an incentive-compatibility constraint edge and a labelling constraint edge.
These cycles have length $2p$ and are given by
\begin{gather}
    \label{eq:bnom-guaranteed-ic-cycles}
    C_p^\beta = \{ \beta_{k,k} \rightarrow^{t_k} \beta_{k,k+1} \rightarrow^{t_k} \beta_{k+1,k+1} \, : \, k \in [p] \}, \\
    \label{eq:wnom-guaranteed-ic-cycles}
    C_p^\omega = \{ \omega_{k,k} \rightarrow^{t_k} \omega_{k,k+1} \rightarrow^{t_{k+1}} \omega_{k+1,k+1} \, : \, k \in [p] \},
\end{gather}
again with indices $-1$ and $p+1$ wrapping around appropriately.
The weights on the cycles in \eqref{eq:bnom-guaranteed-island-cycles} and \eqref{eq:wnom-guaranteed-island-cycles} are given by
$w(C_p^\beta) = \sum_{k \in [p]} t_k(f(\beta_{k,j})) - t_k(f(\beta_{k+1,j}))$
and 
$w(C_p^\omega) = \sum_{k \in [p]} t_k(f(\omega_{k-1,j})) - t_k(f(\omega_{k,j}))$,
respectively.
Since the weight of an edge $(x,y;t)$ is given by $t(f(x))-t(f(y))$ then the ``middle terms'' of the cycles in \eqref{eq:bnom-guaranteed-ic-cycles} and \eqref{eq:wnom-guaranteed-ic-cycles} cancel and their weight is equal to
$w(C_p^\beta) = \sum_{k \in [p]} t_k(f(\beta_{k,k})) - t_k(f(\beta_{k+1,k+1}))$
and
$w(C_p^\omega) = \sum_{k \in [p]} t_k(f(\omega_{k-1,k})) - t_k(f(\beta_{k,k+1}))$,
respectively.

There must exist a best-case labelling $\beta$ and worst-case labelling $\omega$ such that the resulting graphs $\mc{G}_{i,f}^\beta$ and $\mc{G}_{i,f}^\omega$ have no negative cycles in order for $f$ to be implementable as a NOM mechanism.
In the next section we will use the guaranteed cycles we have discussed in order to characterise the class of implementable functions for single-parameter agents.
\section{Single-Parameter Agents}

\label{sec:single-parameter}

In this section and the remainder of the paper we apply our analysis to single-parameter agents.
In this setting types can be described by a single number: when agent $i$ has type $t_i$ she values the outcome of $f$ on input $b$ as $t_i(f(b)) = t_i \cdot f_i(b)$, the product of her ``value per unit'' and her allocation, which allows us to decouple an agent's type from the outcome of the mechanism to prove properties of the latter in isolation.
This approach has seen much success in the literature: \cite{ArcherTardos2001} use it to show that designing truthful mechanisms reduces to designing monotonic algorithms (whether they are increasing or decreasing depends on whether agent types describe costs or valuations), and in such instances we can derive explicit formulae for truthful payments using the area under the allocation curve.
Other settings for single-parameter agents include combinatorial auctions \cite{Archer2003}, machine scheduling \cite{Andelman2006}, and payment computation \cite{Dobzinski2021}.

Our goal is to characterise the class of allocation functions that are implementable as NOM mechanisms for single-parameter agents.
In the case of strategyproofness it is well known that the allocation rule must be monotone in each player's bid for each fixed $b_{-i}$.
For our setting the ``shape'' of the allocation function will depend on the labelling, and we use the guaranteed cycles of the previous section to derive the following properties imposed by any labelling.
Take $p = 2$ in the guaranteed cycles $C_p^\beta$ and $C_p^\omega$ in \eqref{eq:bnom-guaranteed-island-cycles} and \eqref{eq:wnom-guaranteed-island-cycles}.
Then for any type $t_1, t_2 \in D_i$ the weights on the cycles in \eqref{eq:bnom-guaranteed-island-cycles} and \eqref{eq:wnom-guaranteed-island-cycles} we have
$(t_1 - t_2) (f_i(\beta_{1,1} - f_i(\beta_{2,2})) \ge 0$
and 
$(t_1 - t_2) (f_i(\omega_{2,1} - f_i(\omega_{1,2})) \ge 0$.
As for the the cycles in \eqref{eq:bnom-guaranteed-ic-cycles} and \eqref{eq:wnom-guaranteed-ic-cycles} we require
$(t_1 - t_2) (f_i(\beta_{1,j}) - f_i(\beta_{2,j}) \ge 0$
and
$(t_1 - t_2) (f_i(\beta_{2,j}) - f_i(\beta_{1,j}) \ge 0$, for any bid type $t_j \in D_i$.
For any labelling these inequalities must be satisfied for any two types from player $i$'s domain.
The following property describes when it is possible to find a valid labelling of $f$ that leads to a NOM mechanism for player $i$.

\begin{definition}[Overlapping]
    Let $O_{ij} = \{ f_i(t_j,b_{-i}) \, : \, b_{-i} \in D_{-i} \}$ denote the set of allocations of $f$ for player $i$ when she submits bid $t_j$, and let $O_i = \times_{j \in D_i} O_{ij}$.
    Allocation function $f$ is \emph{overlapping for $i$} if there exists $\mathbf{o}=(o_1,\ldots,o_d) \in O_i$ such that $o_1 \le o_2 \le \ldots \le o_d$, and  $f$ is \emph{overlapping} if it is overlapping for each player $i \in [n]$.
\end{definition}

In the proof of \Cref{thm:overlapping} we use the overlapping property to provide a labelling of $f$ that results in no negative cycles.
For this we use the pseudo-inverse function $\hat{f} : D_i \times O_{i} \to D$ which, given a bid $t_j$ of player $i$ and an allocation $o_j$ returns a bid profile $b = (t_j,b_{-i}) \in D$ such that $f_i(t_j,b_{-i}) = o_j$.

\begin{theorem}
    \label{thm:overlapping}
    Allocation function $f$ is either BNOM- or WNOM-implementable for single-parameter agents if and only if it is overlapping.
\end{theorem}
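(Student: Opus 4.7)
I plan to apply the cycle monotonicity characterisation (\Cref{thm:nom-implementable}) together with the structural constraints extracted from the guaranteed cycles identified in the previous paragraphs. I treat the WNOM direction in detail; the BNOM argument is parallel and in fact a bit easier.

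\textbf{Necessity (implementable $\Rightarrow$ overlapping).} Suppose $f$ is WNOM-implementable for $i$. By \Cref{thm:nom-implementable} some labelling $\omega$ makes $G_{i,f}^\omega$ free of negative cycles. Specialising this to the $C_2^\omega$ cycle and to the in-island 2-cycles \eqref{eq:wnom-two-cycles} produces scalar inequalities about $f_i$ evaluated at the chosen labels; from these I will extract, for each $j \in [d]$, an element $o_j \in S_{ij}$ such that $o_1 \le \cdots \le o_d$, so $f$ is overlapping. For BNOM the analog is immediate: the $C_2^\beta$ inequality in \eqref{eq:bnom-guaranteed-cycles} says that $o_j := f_i(t_j,\beta_{jj})$ is non-decreasing in $j$, and each such $o_j$ lies in $S_{ij}$ by definition of the labelling, so it is a direct overlap witness. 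For WNOM the guaranteed cross-island 2-cycle directly compares off-diagonal labels rather than diagonal ones, so I bridge to a column-aligned chain by transporting via the in-island monotonicity in \eqref{eq:wnom-two-cycles}, which compares different labels within the same column of $\omega$.

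\textbf{Sufficiency (overlapping $\Rightarrow$ implementable).} Given an overlap witness $(o_1,\ldots,o_d)$, for each $k$ use the pseudo-inverse $\hat f_i$ to pick a profile $\hat x^{(k)} = (t_k, \hat x_{-i}^{(k)})$ with $f_i(\hat x^{(k)}) = o_k$, and define the labelling by $\omega_{jk} := \hat x_{-i}^{(k)}$ for every $j$, so that the value $f_i(t_k,\omega_{jk}) = o_k$ is independent of the row index $j$. Under this choice the weights of the ``guaranteed'' cycles telescope to non-negative increments of the form $(t_k - t_j)(o_k - o_j)$, positive by the monotonicity of the $o$'s. Invoking \Cref{thm:nom-implementable} then yields the desired payments. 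As an alternative concrete construction, I can set Archer--Tardos style payments on the chain $(\hat x^{(k)})_k$ to make that chain internally strategyproof, and pick the remaining payments so that each $\hat x^{(k)}$ is worst-case in its $k$-island for every type $t_j$; a short check confirms both WNOM labelling and IC constraints.

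\textbf{Main obstacle.} The harder step is sufficiency: we need \emph{every} cycle in $G_{i,f}^\omega$ to be non-negative, not only the distinguished ones in \eqref{eq:wnom-guaranteed-cycles} and \eqref{eq:wnom-two-cycles}. The key observation I will lean on is that our labelling makes $f_i(t_k,\omega_{jk}) = o_k$ independent of $j$, so all in-island labelling edges in column $k$ share a common anchor value $o_k$; cycle weights can therefore be regrouped so that any path through the graph telescopes into a sum of monotonicity increments of the form $(t_k - t_j)(o_k - o_j)$, with the residual in-island contributions absorbed into the adjacent IC edges. The necessity direction for WNOM is also subtler than its BNOM counterpart precisely because the natural diagonal chain is not directly governed by a guaranteed cycle, and must be produced by combining cross-island and in-island inequalities.
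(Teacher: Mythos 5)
Your sufficiency argument and your BNOM necessity argument are essentially the paper's proof. For sufficiency the paper also builds a single-line labelling from the overlap witness (it constructs the witness via the iterated maxima $W(j)$ rather than using it directly) and then bounds every cycle weight from below by a sum of adjacent two-cycle weights of the form $(t_k-t_j)(o_k-o_j)\ge 0$; your observation that under a single-line labelling only label nodes can lie on non-trivial cycles, plus telescoping, is the same argument. Your BNOM necessity, reading off the witness $o_j=f_i(t_j,\beta_{jj})$ directly from the guaranteed cycles, is correct and in fact cleaner than the paper's contradiction argument.

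The gap is in your WNOM necessity, at exactly the step you flagged as subtle: the ``transport'' runs in the wrong direction. The in-island two-cycles \eqref{eq:wnom-two-cycles} say that within column $j$ the label values $f_i(t_j,\omega_{kj})$ are \emph{non-increasing} in the row index $k$, while the cross-island inequality for $t_j<t_k$ says $f_i(t_j,\omega_{kj})\le f_i(t_k,\omega_{jk})$, i.e.\ it bounds a below-diagonal entry by an above-diagonal one. To chain columns $j$, $j+1$, $j+2$ you would need the row-$j$ entry of column $j+1$ to be at most its row-$(j+2)$ entry, which is precisely the opposite of what in-island monotonicity provides. The available inequalities therefore only produce a zigzag $f_i(t_j,\omega_{(j+1),j}) \le f_i(t_{j+1},\omega_{j,(j+1)}) \ge f_i(t_{j+1},\omega_{(j+2),(j+1)}) \le f_i(t_{j+2},\omega_{(j+1),(j+2)}) \ge \cdots$, never a monotone selection.

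Moreover, no bookkeeping can repair this, because the implication you are trying to prove fails once $|D_i|\ge 3$. Take $D_i=\{1,2,3\}$ and $D_{-i}=\{a,b\}$, let the allocations for reports $t_1,t_2,t_3$ be $f_i(t_j,a)=4-j$ and $f_i(t_j,b)=8-j$ (so $S_{i1}=\{3,7\}$, $S_{i2}=\{2,6\}$, $S_{i3}=\{1,5\}$), and set payments $p_i(t_1,a)=6$, $p_i(t_1,b)=2$, $p_i(t_2,a)=11$, $p_i(t_2,b)=3$, $p_i(t_3,a)=16$, $p_i(t_3,b)=4$. The worst-case utilities of a type $t$ per report are $u_1(t)=\min(3t+6,7t+2)$, $u_2(t)=\min(2t+11,6t+3)$, $u_3(t)=\min(t+16,5t+4)$, and one checks $u_1(1)=9\ge\max(u_2(1),u_3(1))=9$, $u_2(2)=15\ge\max(12,14)$, $u_3(3)=19\ge\max(15,17)$: the mechanism is WNOM, yet $\{3,7\},\{2,6\},\{1,5\}$ admits no non-decreasing selection, so $f$ is not overlapping. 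Note that the paper's own necessity proof never attempts your chaining: it reads ``not overlapping'' as the existence of two types with $S_{i2}$ lying entirely below $S_{i1}$, for which a single cross-island inequality already gives a contradiction; that reading of the negation is valid only for two-type domains, which is exactly the regime in which the obstruction above disappears. So your sufficiency and BNOM necessity stand, but the WNOM necessity cannot be completed as proposed --- and for $d\ge 3$ cannot be completed at all.
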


\begin{proof}
    ($\implies$)
    Assume for the sake of contradiction that there are two bids $t_1,t_2 \in D_i$ with $t_1 < t_2$ such that $f$ is not overlapping for $i$.
    That is, for any choice of $o_1 \in O_{i,1}$ we must have $o_2 < o_1$ for $o_2 \in O_{i,2}$.
    Since $f$ is WNOM-implementable then there exists a worst-case labelling $\omega$ such that the graph $\mc{G}_{i,f}^\omega$ contains no cycles of negative weight.
    By the monotone labelling property we require $(t_1-t_2)(f_i(t_1,\omega_{2,1})-f_i(t_2,\omega_{1,2})) \ge 0$, so $f_i(t_1,\omega_{2,1}) \le f_i(t_2,\omega_{1,2})$.
    Since $o_2 < o_1$ for any choice of $o_1$ then we must have $f_i(t_1,\omega_{j,1}) > f_i(t_2,\omega_{k,2})$ for all (not necessarily distinct) $t_j,t_k \in D_i$.
    Thus $f$ cannot be WNOM-implementable for $i$, a contradiction.
    Therefore $f$ must be overlapping for $i$.

    ($\impliedby$)
    We will use the overlaps of $f$ for player $i$ to define a valid worst-case labelling $\omega$ and note that the exact same reasoning carries over for best-case labellings $\beta$.
    Define iteratively the values $W(j) = \max \{ \min_{o_j \in O_{i,j}} \{ o_j \}, W(1), W(2), \ldots, W(j-1) \}$ for $j \in [d]$.
    Note that $W(j) \ge W(k) \iff t_j \ge t_k$.

    We will use $W$ to label $f$ with $\omega_{j,k} = \hat{f}_i(t_k,W(k))$ for each $t_j,t_k \in D_i$.
    Since for every bid $t_k$ we have $\omega_{j,k} = \omega_{\ell,k}$ for all types $t_k,t_\ell \in D_i$ we use a single subscript $\omega_k$ to denote this bid profile.
    We now show that no cycles that appear in the graph $\mc{G}_{i,f}^\omega$ are negative and hence $\omega$ is a valid worst-case labelling.
    
    First note that since for every bid $t_j$ we label the worst-case outcome for $i$ to occur at the same bid profile $\omega_j$ no matter her type, then each subgraph of $\mc{G}_{i,f}^\omega$ containing only nodes $(t_j,x_{-i})$ in which $i$ has bid $t_j$ will have no cycles, and these edges represent the labelling constraints. Hence we need only focus on edges which go from node $(t_j,\omega_j)$ to some $(t_k,\omega_k)$, which represent the incentive-compatibility constraints. Recall that the type annotation on each of these edges is always equal to the bid of player $i$ at the source node, so every edge has the form $(t_u,\omega_u) \rightarrow^{t_u} (t_v,\omega_v)$. Consider any cycle $C$ in $\mc{G}_{i,f}^\omega$ with weight $w(C) = \sum_{(u,v;t_u) \in C} t_u(f_i(\omega_u)-f_i(\omega_v))$. We will refer to negative weight edges in $C$ as ``uphill'' edges and positive weight edges as ``downhill'' edges. We can safely ignore edges of zero weight. Note that uphill edges occur when the output of $f$ for player $i$ is lower at the source than the destination profile, and since $W(j) \ge W(k) \iff t_j \ge t_k$ then all such edges go from a node in which $i$ bids a smaller type to one in which she bids a larger one. The opposite is true for downhill edges. For a given bid $t_j$ we will refer to the types $\max \{ t_k \in D_i \, : \, t_k < t_j \}$ and $\min \{ t_k \in D_i \, : \, t_k > t_j \}$ as the adjacent type to $t_j$, and adjacent nodes in $\mc{G}_{i,f}^\omega$ will be those in which $i$ submits adjacent types as bids to the mechanism. We will show that the weight of $C$ is bounded below by the weight of a (non-simple) cycle $C'$ which only contains edges between adjacent nodes, whose weight is in turn always non-negative.
    
    In the following we will use simply the label $\omega_j$ as shorthand for the bid profile $(t_j,\omega_j)$. First consider a downhill edge $e^+ = \omega_j \rightarrow^{t_j} \omega_{j-k}$ going directly from $\omega_j$ to $\omega_{j-k}$ that ``skips'' $k-1$ adjacent nodes. Since $t_j > t_{j-k}$ then $W(j) = f_i(t_j,\omega_j) > f_i(t_{j-k},\omega_{j-k}) = W(j-k)$. The weight $w(e^+)$ of such an edge is therefore:
    \begin{align*}
        w(e^+) & = t_j [ f_i(\omega_j) - f_i(\omega_{j-k}) ] \\
        & = t_j [ f_i(\omega_j) - f_i(\omega_{j-1}) + f_i(\omega_{j-1}) - f_i(\omega_{j-2}) + \ldots + f_i(\omega_{j-k+1}) - f_i(\omega_{j-k}) ] \\
        & \ge t_j(f_i(\omega_j)-f_i(\omega_{j-1})) + t_{j-1}(f_i(\omega_{j-1})-f_i(\omega_{j-2})) + \ldots + t_{j-k+1}(f_i(\omega_{j-k+1})-f_i(\omega_{j-k}))
    \end{align*}
    Following the same reasoning we can show that the weight of an uphill edge $e^- = (t_j,\omega_j) \rightarrow^{t_j} (t_{j+k},\omega_{j+k})$, in which we skip $k-1$ adjacent nodes, can be similarly lower bounded:
    \begin{align*}
        w(e^-) & = t_j [ f_i(\omega_j) - f_i(\omega_{j+k}) ] \\
        & = t_j [ f_i(\omega_j) - f_i(\omega_{j+1}) + f_i(\omega_{j+1}) - f_i(\omega_{j+2}) + \ldots + f_i(\omega_{j+k-1}) - f_i(\omega_{j+k}) ] \\
        & \ge t_j(f_i(\omega_j)-f_i(\omega_{j+1})) + t_{j+1}(f_i(\omega_{j+1})-f_i(\omega_{j+2})) + \ldots + t_{j+k-1}(f_i(\omega_{j+k-1})-f_i(\omega_{j+k}))
    \end{align*}
    Notice in both cases that the right hand side of the final inequality is the expression for the weight of a path between two non-adjacent nodes that only traverses edges between adjacent nodes.
    Now if we take any cycle $C$ and replace all such non-adjacent edges with a path comprising only adjacent edges then we will be left with a non-simple cycle that is composed of many two-cycles (see \Cref{fig:adjacent-type-cycles}).
    We know from the labelling monotonicity property of our labelling $\omega$ that all two-cycles are non-negative, hence so is their sum.
    This non-negative value is a lower bound on $w(C)$, thus there must be no negative cycles induced by labelling $\omega$ in $\mc{G}_{i,f}^\omega$ and by \Cref{thm:nom-implementable} then $f$ is implementable and the proof is complete.
\end{proof}

\begin{figure}[ht]
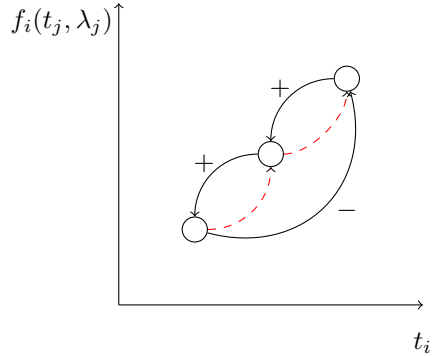

    \ctikzfig{tikzfigs/uphill-edges-graph}
    \caption{%
    The labelling given in the proof of \Cref{thm:overlapping} is always possible whenever $f$ is overlapping and will end up with edges that ``skip'' adjacent types.
    The weight of each cycle that uses such edges is no less than the weight of any cycle with edges only going between adjacent types, which in turn all have non-negative weight.
    This is true when replacing both ``uphill'' edges (pictured) and ``downhill'' edges between non-adjacent types with adjacent ones.
    Therefore the original labelling yields a valid NOM mechanism.
        }
    \label{fig:adjacent-type-cycles}
\end{figure}


%


As a corollary of \Cref{thm:overlapping} we can show that as long as $f$ is overlapping then it is straightforward to achieve both BNOM and WNOM simultaneously.
The only extra condition imposed is that unless the function $f$ is constant for player $i$ on bid $t_j$ -- that is, it allocates the same amount $f_i(t_j,\cdot)$ to $i$ whenever she bids $t_j$, no matter the bid profile of the other players -- then the best- and worst-case labelled profiles must be different.

We use the fact that $f$ is overlapping to prove sufficiency in \Cref{thm:overlapping} by defining a particular ``simple'' labelling in which the best- and worst-case outcomes for a given bid of player $i$ all occur at the same bid profile, no matter her type.
Such labellings, which we call \emph{single-line labellings}, allow us to precisely characterise the payments that lead to incentive-compatibility.

\begin{definition}[Single-line mechanism]
    A labelling $\lambda$ is \emph{single-line} if for every player $i$, for every bid $t_j \in D_i$ we have $\lambda_{kj} = \lambda_{\ell j}$ for every pair of types $t_k, t_\ell \in D_i$.
    A \emph{single-line mechanism} is one whose payments are defined on a social choice function $f$ whose labelling is single-line.
\end{definition}

When a labelling $\lambda$ is single-line then we will index it simply by the agent's bid: if player $i$ bids $t_j$ in a single-line mechanism then labelled profile associated with this bid is denoted $\lambda_j$.
After using these labellings to derive explicit payment formulae for single-line mechanisms we will then show in \Cref{sec:bilateral-trade} that more general labellings can lead to different payments.


\cite{ArcherTardos2001} show that a work function that allocates chores (where types describe \emph{costs} as opposed to valuations) admits truthful payments if and only if it is non-increasing and characterise these payments using the area under the allocation function.
\cite{Apt2022} recently provided an elementary proof of the uniqueness of these payment functions using a little-known result from the theory of real functions.
It is straightforward to derive a similar result in our setting for single-line mechanisms: we show that NOM payments exist if and only if one can find a non-decreasing collection of outcomes for the best- and worst-cases of the function.
Fixing such a labelling $\lambda$, we may treat $f$ as a function only in player $i$'s bid $b_i$ and derive the following payments by solving the resulting differential equations from \cite{ArcherTardos2001} in the same way.

\begin{theorem}
    \label{thm:explicit-payments}
    Social choice function $f$ is NOM-implementable if and only if $f$ is overlapping. When this is the case and the resulting mechanism $(f,p)$ uses a single-line labelling, the payments $p$ take the form
    \begin{gather}
        \label{eq:nom-explicit-payments}
        p_i(t_j,\lambda_j) = h_i(\lambda) - t_j f_i(t_j,\lambda_j) + \int_0^{t_j} f_i(u,\lambda_u) \, \mathrm{d} u,
    \end{gather}
    for each agent $i$, where $h_i(\lambda)=h_i(\lambda_1,\ldots,\lambda_d)$ is an arbitrary function that depends only on the (best- or worst-case) labelling $\lambda$.
\end{theorem}

Requiring IR and NPT further restricts the above payments.
For agents with valuation functions over $O$ we have $h_i(\beta) = h_i(\omega) = 0$, while for agents with cost functions we have $h_i(\beta) \ge \int_0^{t_j} f_i(u,\beta_u) \, \mathrm{d} u$ and $h_i(\omega) \ge \int_0^{t_j} f_i(u,\omega_u) \, \mathrm{d} u$ for all $b \in D_i$.
The remaining payments (i.e., for bid profiles which are not best- or worst-case inputs for player $i$) must simply satisfy the labelling constraints induced by $\beta$ and $\omega$.
For each $t_j,t_k \in D_i$ we simply require $t_j(f_i(\beta_j)) + p_i(\beta_k) \ge t_j(f_i(t_k,b_{-i})) + p_i(t_k,b_{-i})) \ge t_j(f_i(\omega_k)) + p_i(\omega_k)$ for each $b_{-i} \in D_{-i}$.
It is important to note the transition from finite to infinite domains, done both to mirror the setting of \cite{ArcherTardos2001} and to use in the upcoming section.
Conceptually this is not an issue and in the case of finite domains we can simply replace the above integral with a summation.

\section{Bilateral Trade}

\label{sec:bilateral-trade}

We now apply our previous analysis to bilateral trade.
In this setting there are two agents, $B$ and $S$, where $B$ is a potential buyer of an item that $S$ may produce and sell.
The buyer has a valuation $v \in D_B$ for the item and the seller incurs a cost $c \in D_S$ for producing the item.
We assume $D_B = D_S = [0,1]$, and let $D = D_B \times D_S$.
A mechanism for bilateral trade is a tuple $M=(f,p)$ where $f : D \to \{0,1\}$ indicates whether a trade takes place and $p : D \to \mathbb{R}_{\ge 0}$ describes the payments.
The buyer's utility is therefore given by $u_B(M(x,y)) = v \cdot f(x,y) - p_B(x,y)$ and the seller's by $u_S(M(x,y)) = p_S(x,y) - c \cdot f(x,y)$.

In addition to individual rationality (defined in \Cref{sec:preliminaries}) we focus on the following properties: a mechanism $M = (f,p)$ is \emph{efficient} when $f(x,y) = 1$ if and only if $x \ge y$ for all $(x,y) \in D$; and \emph{weakly budget balanced} if $p_S(x,y) \le p_B(x,y)$ for all $(x,y) \in D$.
\cite{Troyan2020} show that every efficient, individually rational, weakly budget balanced mechanism for bilateral trade is obviously manipulable, hence we relax budget balance and say that $M$ is \emph{$\alpha$-weakly budget balanced} if $p_S(x,y) \le \alpha \cdot p_B(x,y)$ for $\alpha \ge 1$.
We begin by showing that single-line mechanisms cannot resolve the impossibility result with any bounded subsidy.

\subsection{Single-line mechanisms require unbounded subsidies}

We first note the form taken by the explicit payment formulae for single-line mechanisms from \Cref{thm:explicit-payments}.
In this setting the buyer must pay $p_B(x,\lambda^B_x) = h_B(\lambda^B) + x f(x,\lambda^B_x) - \int_0^x f(u,\lambda^B_u) \, \mathrm{d} u$ while the seller must receive $p_S(\lambda^S_y,y) = h_S(\lambda^S) + y f(\lambda^S_y,y) - \int_0^y f(\lambda^S_u,u) \, \mathrm{d} u$, where $\lambda^B$ is any labelling for the buyer (and thus) represents a bid of the seller and $\lambda^S$ is any labelling for the seller (and thus represents a bid of the buyer).
Again notice that if we have IR and NPT then $h_B(\lambda^B)$ goes to zero, while for the seller $h_S(\lambda^S)$ must be at least the area under the curve.

The following facts will be useful.
Let $x$ and $y$ be bids of the buyer and seller, respectively.
For any efficient IR mechanism we require  $p_B(x,y) \le x$ and $p_S(x,y) \ge y$ for all $x \ge y$, while for $x < y$ we have $p_B(x,y) = 0$.
Therefore whenever trade does not take place the utility of the buyer is 0 no matter her valuation.
In \Cref{lem:bnom-wnom-threshold} we show that for a single-line mechanism the curves representing the best- and worst-case outcomes of $f$ are constant.
Since the buyer values the item while the seller incurs a cost for it then by \Cref{thm:overlapping} the best- and worst-case outcomes of $f$ must be non-decreasing for the buyer and non-increasing for the seller.
If $M=(f,p)$ is NOM and $f$ takes values in $\{0,1\}$ then there must be a threshold at which, for all bids that are at least this threshold for some player, their best- or worst-case allocation from the mechanism flips from one value to the other.
We formalise this in the following lemma and show for both the buyer and the seller that these thresholds must be placed at the extremes, namely, 0 or 1.

\begin{lemma}
    \label{lem:bnom-wnom-threshold}
    In any efficient, IR, single-line BNOM mechanism the best-case outcomes always occur for both the buyer and the seller when the trade is executed, while in any efficient, IR, single-line WNOM mechanism the worst-case outcomes always occur for both the buyer and the seller when the trade is not executed.
\end{lemma}

\begin{proof}
    Let $M=(f,p)$ be an efficient, IR, single-line NOM mechanism and first consider the threshold for best-case outcomes from the buyer's side.
    Since $f$ is non-decreasing let $\theta_B$ be a threshold bid of the buyer such that $f(x,\beta^B_x) = 1$ for all $x \ge \theta_B$.
    Now take $x < \theta_B$ and let $y$ be a bid of the seller such that $f(x,y)=1$.
    By the BNOM labelling constrains we have $u_B(M(x,\beta^B_x)) = 0 \ge v - p_B(x,y)$ for all $v \in D_B$, so $1 \le p_B(x,y) \le x$, where the second inequality follows from IR and efficiency.
    Now we have $1 \le x < \theta_B$ where $\theta_B \in [0,1]$, a contradiction.
    Hence there is no $x$ such that $x < \theta_B$, meaning $\theta_B = 0$ and trade always occurs in the buyer's best-case outcome.
    
    Now consider the seller.
    Since $f$ must be non-increasing in her bid there again must be a threshold $\theta_S \in [0,1]$ such that $f(\beta^S_y,y)=1$ for all $y \le \theta_S$ and 0 otherwise.
    For some bid $y > \theta_S$ of the seller her best case utility is $u_S(M(\beta^S_y,y)) = p_S(\beta^S_y,y) = h_S(\beta^S) + y f(\beta^S_y,y) - \int_0^y f(\beta^S_u,u) \, \mathrm{d} u$.
    Since $c$ describes the seller's cost then $h_S(\beta^S)$ must be at least the area under the curve.
    Setting $h_S(\beta^S)$ to this area gives $p_S(\beta^S_y,y) = y f(\beta^S_y,y) + \int_y^1 f(\beta^S_u,u) \, \mathrm{d} u$ so her best-case utility is 0.\footnote{%
    We can of course pay the seller more, as long as it doesn't interfere with incentives.
    We will be using this lemma to prove the impossibility of (bounded) $\alpha$-WBB, so if we can show this when we are being frugal then the claim will hold even when using more generous payments.
    Thus we set $h_S(\beta)$ as small as possible -- exactly the area under the curve -- and this is enough to prove the unboundedness.%
    }
    Now take any $x$ such that $f(x,y)=1$.
    By the BNOM constraints we have $u_S(M(x,y)) = p_S(x,y) - c \le 0$ for all $c$.
    Combining this with individual rationality we have $c \ge p_S(x,y) \ge y$, which gives us a contradiction if we take $c < y$.
    Thus $y \le \theta_S$ for all $y$ and so $\theta_S = 1$, meaning trade always occurs in the seller's best-case outcome.
    
    We can apply similar reasoning to show that trade does not occur for the buyer or seller in their worst-case outcome, no matter their bid.
    Since the mechanism is NOM then $f$ must be overlapping, hence there is a threshold bid $\theta_B$ and $\theta_S$ for the buyer and seller respectively whereby for all $x \ge \theta_B$ we have $f(x,\omega^B_x) = 1$ and $0$ otherwise, and for all $y \le \theta_S$ we have $f(\omega^S_y,y) = 1$ and $0$ otherwise.
    First consider the buyer.
    Assume there is a bid $x > \theta_B$ of the buyer and $y$ a bid of the seller such that $f(x,y)=0$.
    The labelling constraints require that $v - p_B(x,\omega^B_x) \le 0$ for all $v \in D_B$, and combining this with the previous facts we get $p_B(x,\omega^B_x) = x$.
    Now taking $v > x$ the buyer's utility at her worst-case outcome is strictly positive, a contradiction.
    Hence $\theta_B \ge x$ for all $x$ and thus $\theta_B = 1$.
    Therefore in the buyer's worst case the trade is never executed.
    
    Finally now focus on the worst cases of the seller, let $y < \theta_S$ and let $x$ be a bid of the buyer such that $f(x,y) = 0$.
    Then $p_S(x,y) = 0$, and since trade occurs in the seller's worst case by assumption then $p_S(\omega^S_y,y) \le y$.
    By \eqref{eq:nom-explicit-payments} then $p_S(\omega^S_y,y) \ge h_S(\omega^S) + y f(\omega^S_y,y) - \int_0^y f(\omega^S_u,y) \, \dx{u} = h_S(\omega^S) + y - y \ge \theta_S$, where the final inequality follows from the fact that $h_S(\omega^S)$ must be at least the area under the curve for any IR and NPT mechanism.
    Thus $p(\omega^S_y,y) \ge \theta_S$ and $p_S(\omega^S_y,y) \le y$ where $y < \theta_S$, a contradiction.
    So $\theta = 1$ and the seller's worst case always occurs when the trade is not executed.
\end{proof}

\begin{figure}[h]
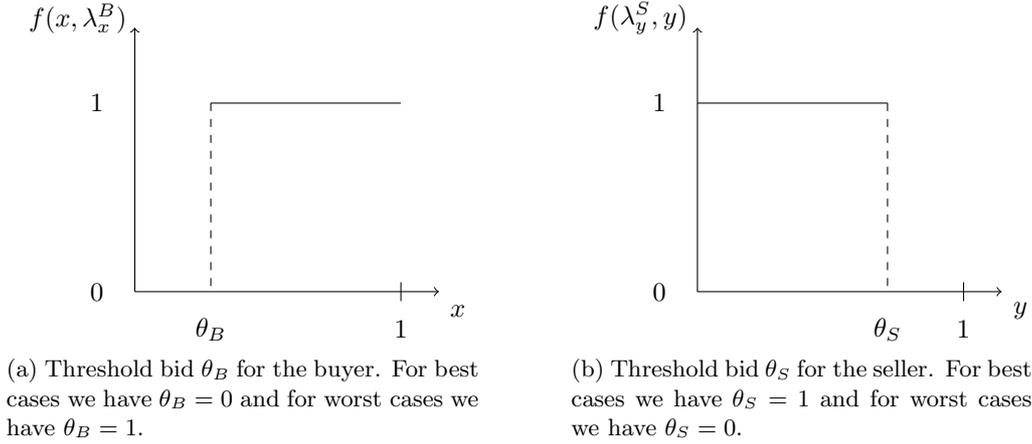

    \centering
    \def\twidth{0.45}
    \subfloat[Threshold bid $\theta_B$ for the buyer. For best cases we have $\theta_B = 0$ and for worst cases we have $\theta_B = 1$.]{%
        \label{fig:bilateral-trade-buyer-threshold}
        \tikzfig{tikzfigs/bilateral-trade-buyer-threshold}
    } \hfil
    \subfloat[Threshold bid $\theta_S$ for the seller. For best cases we have $\theta_S = 1$ and for worst cases we have $\theta_S = 0$.]{%
        \label{fig:bilateral-trade-seller-threshold}
        \tikzfig{tikzfigs/bilateral-trade-seller-threshold}}
    
    \caption{Thresholds bids used in \Cref{lem:bnom-wnom-threshold}.}
    \label{fig:bilateral-trade-threshold-bids}
\end{figure}

As a corollary of \Cref{lem:bnom-wnom-threshold} we have $p_B(x,\omega^B_x) = p_S(\omega^S_y,y) = 0$ for all $x,y$ since trade does not occur at the worst-case profiles.
Conversely the buyer's best outcome must be when she gets the item and pays nothing, while the seller's must occur when she produces the item and is paid at least her bid.
For any IR, NPT, single-line BNOM mechanism $M$ best-case payments are as in \eqref{eq:nom-explicit-payments}, so $p_B(x,\beta^B_x) = 0$ and $p_S(\beta^S_y,y) \ge y + \int_y^1 f(\beta^S_u,u) \, \mathrm{d} u = 1$.
The following result shows that single-line mechanisms are not flexible enough to avoid the impossibility of \cite{Troyan2020} without unbounded subsidy.

\begin{theorem}
    \label{thm:bnom-wnom-single-line-unbounded-alpha}
    Any efficient, IR, single-line BNOM or WNOM mechanism with $\alpha$-WBB has unbounded $\alpha$.
\end{theorem}

\begin{proof}
    We first prove the result for BNOM mechanisms.
    From \Cref{lem:bnom-wnom-threshold} we know that $f(x,\beta^B_x) = f(\beta^S_y,y) = 1$ for any $x \in D_B$ and $y \in D_S$, while $p_B(x,\beta^B_x) = 0$ and $p_S(\beta^S_y,y) \ge y$.
    Since in the best case the buyer pays zero while the seller receives at least her cost, we may prove the claim by providing a bid profile that appears on the best-case curves of both the buyer and seller.
    Take any bid $x$ of the buyer and let $y = \beta^B_x > 0$ be a bid of the seller.
    Since $f(x,y)=1$ then $p_S(x,y) \ge y$ by efficiency and IR.
    The labelling constraints require $u_S(M(x,y)) \ge u_S(M(\beta^S_y,y))$ for all $y$ and so $p_S(\beta^S_y,y) \ge p_S(x,y) \ge y$.
    Since $y = \beta^B_x$ then $p_S(\beta^S_y,y) \ge p_S(x,\beta^B_x) \ge y > 0$, while $p_B(x,\beta^B_x) = 0$.
    Therefore at $(x,\beta^B_x)$ the buyer pays 0 while the seller must receive some positive amount.

    
    Now let $M=(f,p)$ be an efficient, IR, single-line WNOM mechanism and consider a bid $x \in D_B$ of the buyer for which $f(x,\omega^B_x) = 0$ and thus $p_B(x,\omega^B_x) = u_B(M(x,\omega^B_x)) = 0$.
    Let $y \in D_S$ be a bid of the seller such that $f(x,y)=1$ and note that by efficiency $x \ge y$.
    Since the worst utility of the buyer is 0 then $p_B(x,y) \le v$ and thus we need $p_B(x,y) \le \min \{ x \, : \, x \ge y \} = y$.
    Now consider the seller with bid $y \in D_S$ such that $f(\omega^S_y,y) = 0$.
    Let $x \in D_B$ be a bid of the buyer such that $f(x,y) = 1$.
    By the labelling constraints we have $p_S(x,y) - c \ge p_S(\omega^S_y,y) \ge 0$, hence $p_S(x,y) \ge c$ for all $c$ (since $M$ is single-line) and thus $p_S(x,y) \ge \max \{ y \, : \, x \ge y \} = y$.
    
    Now consider the bid profile $(1,\varepsilon)$ for arbitrarily small $\varepsilon > 0$.
    The buyer must pay at most $\varepsilon$ while the seller is to receive at least 1, giving the desired result.
\end{proof}

\subsection{Avoiding subsidies by changing the labelling}

Single-line mechanisms are overly restrictive if we want to design efficient, IR, and $\alpha$-WBB NOM mechanisms for bilateral trade, so we now explore if the issue of unbounded subsidy is avoided by considering more general labellings and hence payments.
We show that while this works in a strong sense for WNOM mechanisms, where we can achieve exact (i.e., $\alpha = 1$) budget balance, surprisingly unbounded subsidies are inherent in any BNOM mechanism.
For the following two results it is enough to consider domains $D_B = [0,1]$ and $D_S = \{ 0,1 \}$.

\begin{theorem}
    \label{thm:wnom-bounded-subsidy}
    There is an efficient, IR, WBB, WNOM mechanism.
\end{theorem}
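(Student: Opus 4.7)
The plan is to exhibit an explicit mechanism---the natural ``first-price'' bilateral trade rule foreshadowed in the introduction---and verify each of the four desiderata directly rather than invoking the cycle-monotonicity machinery. I would define $f(x,y) = 1$ if $x \ge y$ and $0$ otherwise, and set $p_B(x,y) = x \cdot f(x,y)$ together with $p_S(x,y) = y \cdot f(x,y)$. Efficiency is immediate from the definition of $f$, and WBB holds because whenever trade takes place we have $p_S(x,y) = y \le x = p_B(x,y)$, while otherwise both payments vanish.

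For individual rationality I would exploit that truthful utilities are identically $0$: a buyer with true type $v$ bidding $v$ obtains $v \cdot f(v,y) - v \cdot f(v,y) = 0$ for every $y$, and symmetrically the seller always obtains $u \cdot f(x,u) - u \cdot f(x,u) = 0$ when truthful. Thus both $v(\mathcal{M}(v,y)) \ge 0$ and $u(\mathcal{M}(x,u)) \ge 0$ hold trivially, yielding IR.

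The substantive content is checking WNOM via the inequality (2). Since the worst-case truthful utility is $0$ on both sides, it suffices to show that every misreport has worst-case utility at most $0$. For the buyer with true type $v$ and report $x \neq v$, her utility on profile $(x,y)$ is $(v-x) \cdot f(x,y)$: if $x > v$ then any $y$ with $y \le x$ gives utility $v - x < 0$, and if $x < v$ then any $y$ with $y > x$ gives utility $0$, so in either case $\inf_y (v-x) \cdot f(x,y) \le 0$. The seller's argument is entirely symmetric on $(y-u) \cdot f(x,y)$: a report $y < u$ is punished by any $x \ge y$, and a report $y > u$ admits an $x < y$ yielding utility $0$.

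I do not anticipate a serious obstacle, since the theorem is purely an existence claim and the informal discussion preceding it already signals that first-price auctions are WNOM. The only mild caveat is that the witnesses above ($y \le x$ for $x > v$, $y > x$ for $x < v$, and analogously for the seller) must actually lie in the relevant opposing domain; under the paper's standard assumption that $D_B$ and $D_S$ are rich enough (in particular containing values near $0$ and near the maximum bid) this is automatic, and otherwise the choice of witnesses is easily adjusted to the extremal elements of the domain without affecting the conclusion.
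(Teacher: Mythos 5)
Your mechanism is exactly the one the paper constructs --- trade iff $x \ge y$, buyer pays her bid, seller receives her bid --- but your verification route is genuinely different. The paper works inside its labelling/cycle-monotonicity framework: it fixes $D_B=[0,1]$, $D_S=\{0,1\}$, defines the buyer's worst-case labelling by $\omega_{jk}=1$ iff $t_j \ge t_k$, checks that every cycle of the induced graph has non-negative weight, and then derives that this labelling \emph{forces} first-price payments for the buyer; for the seller it asserts that being paid one's own bid is strategyproof and invokes Theorem~\ref{thm:SP-implies-NOM}. You instead verify the WNOM inequality directly from the definitions, using that truthful utility is identically $0$ on both sides, so that it suffices to exhibit, for each misreport, one opposing bid driving that misreport's utility to at most $0$. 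Your route is more elementary and self-contained, and it is in fact sounder on the seller's side: the paper's claim that the seller's payments are strategyproof is not correct (a seller with cost $u$ overbidding $y>u$ gains $y-u>0$ whenever the buyer bids $x \ge y$, versus $0$ under truthfulness), whereas your symmetric worst-case argument establishes WNOM for the seller, which is all the theorem needs. What the paper's route buys is its narrative payoff: the proof exhibits a concrete non-single-line labelling and shows how it pins down the payments, illustrating the subsection's point that single-line WNOM mechanisms need unbounded subsidies while general labellings achieve WBB.

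One concrete correction: your closing caveat, that on arbitrary domains the witnesses can be ``adjusted to the extremal elements of the domain without affecting the conclusion,'' is false --- on mismatched domains the mechanism itself fails WNOM, not just your choice of witnesses. For instance, with $D_B=\{1/2,\,1\}$ and $D_S=\{0,\,1/4\}$, a buyer of type $1$ reporting $1/2$ trades against every $y \in D_S$ and obtains utility $1/2$ for all $y$, so her worst-case misreport utility strictly exceeds her worst-case truthful utility of $0$; no witness can repair this, and the report $1/2$ is an obvious manipulation. Since the theorem is an existence claim this does not sink your proof, but the domains must be fixed as part of the construction (e.g.\ $D_B=D_S=[0,1]$, or the paper's $D_B=[0,1]$, $D_S=\{0,1\}$) rather than treated as a harmless richness assumption.
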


\begin{proof}
    Consider the labelling $\omega^B$ for the buyer such that $\omega^B_{jk} = 1$ if and only if $t_j \ge t_k$, and 0 otherwise.
    Observe that for all $x \in D_B$ every cycle within the $x$-island of $\mc{G}_{B,f}^\omega$ have non-negative weight: the $x$-island contains exactly two nodes $(x,0)$ and $(x,1)$, and $\omega^B$ ensures that each two-cycle has weight at least $x (f(x,0) - f(x,1)) + (x - \varepsilon) (f(x,1) - f(x,0)) = \varepsilon > 0$ (i.e., for $t_j = x$ and $t_k = x - \varepsilon$).
    Recall that incentive-compatibility edges are the only ones to traverse different islands, and note with the following that incorporating these into a cycle of labelling edges results in no negative cycles.
    The shortest such cycle has the form $(x,1) \rightarrow^{x} (x',0) \rightarrow^{x'} (x',1) \rightarrow^{x'} (x,1)$ for $x' > x$, and its weight is minimised when $x' = x+\varepsilon$ for some $\varepsilon > 0$; the weight of the cycle therefore always non-negative.
    It is straightforward to verify that these are the only types of cycle induced by the labelling, and that increasing the length of the cycle will only increase its weight.
    Thus $f$ is WNOM-implementable.
    It now remains to derive the payments this labelling produces.
    
    Fix a bid $t_j$ of the buyer.
    Since $\omega_{jj}=1$ then trade does not occur and we have $u_B(M(t_j,1))=0$ for all $v$.
    Since $\omega_{ij}=0$ for $t_i < t_j$ then trade occurs and we have $v - p_B(t_j,0) \le u_B(M(t_j,1)) = 0$.
    Thus for all $v < t_j$ we have $p_B(t_j,0) \ge v$ and so $p_B(t_j,0)$ tends to $t_j$.
    Combining this with IR we have $t_j \le p_B(t_j,0) \le t_j$ and therefore when the trade occurs the buyer pays her bid, and 0 otherwise.
    Therefore this labelling induces first-price payments for the buyer.
    Now consider payments for the seller where she gets paid her own bid when trade occurs.
    It is straightforward to verify that these payments for the seller are strategyproof and therefore NOM by \Cref{thm:SP-implies-NOM}.
    When trade occurs the payments are $p_B(x,y) = x$ and $p_S(x,y) = y$, and 0 otherwise.
    Since trade occurs only when $x \ge y$ then the payment given to the seller never exceeds the amount extracted from the buyer, hence we achieve (exact) weak budget balance.    
\end{proof}
    
With the following we show that moving away from single-line mechanisms does not avoid the issue of unbounded subsidies for BNOM as it does for WNOM.
The proof, which we omit in the interest of space, is by case analysis on all possible labellings between two types of the buyer's domain and reveals that no labelling results in a mechanism that bypasses the negative result, even if we restrict to these two types.
Each labelling falls into one of four cases: it is single-line, for which we have already proved subsidies are unbounded; it is invalid and leads to a negative cycle, meaning NOM payments are impossible by \Cref{thm:nom-implementable}; it induces no negative cycles but otherwise leads to contradiction when taken with IR and efficiency; or it leaves us with the same payments as in the single-line case, meaning unbounded subsidies are unavoidable.

\begin{theorem}
    \label{thm:bnom-unbounded-subsidy}
    There is no efficient, IR, $\alpha$-WBB, BNOM mechanism for any $\alpha < \infty$.
\end{theorem}

\begin{proof}
    Let $f$ be an efficient allocation function for bilateral trade. We will prove the claim by case analysis of all possible labellings on the subgraph of $G_{B,f}^\beta$ containing only a $j$-island and a $k$-island for some types $t_j,t_k \in D_B$. Each general labelling falls into one of four cases: it is single-line, for which we have already proved subsidies are unbounded; it is invalid and leads to a negative cycle $G_{B,f}^\beta$, meaning payments ensuring BNOM are impossible; it induces no negative cycles in $G_{B,f}^\beta$ but otherwise leads to contradiction when taken with IR and efficiency; or it leaves us with the same payments as in the single-line case, meaning unbounded subsidies are unavoidable. The first two cases are straightforward so we will focus on the latter two.
    
    Let $D_B=[0,1]$ and $D_S=\{0,1\}$. First note that in any valid labelling $\beta$ each $j$-island of $G_{B,f}^\beta$ will consist only of two nodes, $(t_j,0)$ and $(t_j,1)$, hence for any labelling to be valid it must be that $f(t_j,\beta_{ij}) \ge f(t_j,\beta_{kj}) \iff t_i \ge t_k$. Now fix $t_j,t_k \in D_i$ with $t_j < t_k$ and observe that by IR we have $p_B(t_j,0) \le t_j$, so either $\beta_{jj} = (t_j,0)$ or $\beta_{jj} = (t_j,1)$ and $p_B(t_j,0) = t_j$. The two cases for $\beta$ we will discuss are as follows.
        
    \begin{figure}[ht]
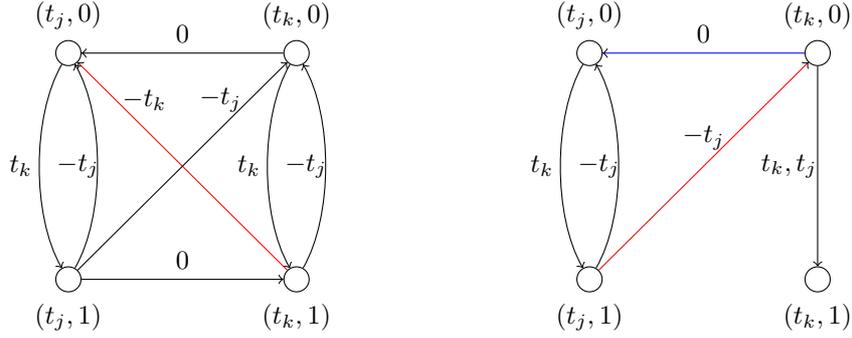

        \centering
    	\begin{subfigure}{.4\textwidth}
            \centering
            \ctikzfig{tikzfigs/bad-beta-1}
            \caption{A valid BNOM labelling incompatible with IR and efficiency.}
            \label{fig:bad-beta-1}
    	\end{subfigure} \quad
    	\begin{subfigure}{.4\textwidth}
            \centering
            \ctikzfig{tikzfigs/bad-beta-2}
            \caption{A valid BNOM labelling that yields single-line payments.}
            \label{fig:bad-beta-2}
    	\end{subfigure}
    	\caption{No BNOM labelling can achieve bounded subsidies in an efficient, IR mechanism for bilateral trade.}
    	\label{fig:bad-beta}
    \end{figure}
    
    \emph{Case 1: $\beta$ is incompatible with IR and efficiency.} Consider $\beta$ where $\beta_{jj} = (t_j,1)$, $\beta_{kj} = (t_j,0)$, $\beta_{kk} = (t_k,1)$, and $\beta_{jk} = (t_k,0)$. By the incentive-compatibility constraints (represented by the red edge in \Cref{fig:bad-beta-1}) we have $0 \ge t_k - p_B(t_j,0) = t_k - t_j$, a contradiction since $t_j < t_k$. 
    
    \emph{Case 2: $\beta$ leads to single-line payments.} Consider $\beta$ where $\beta_{jj} = \beta_{kj} = (t_j,0)$, $\beta_{kk} = (t_k,0)$, and $\beta_{jk} = (t_k,1)$ and whose graph $G_{B,f}^\beta$ is given in \Cref{fig:bad-beta-1} in which edges are annotated with their weights. By incentive-compatibility constraints we have $0 \ge t_j - p_B(t_k,0)$ (represented by red edge in \Cref{fig:bad-beta-2}) and hence $p_B(t_k,0) \ge t_j$, as well as $t_k - p_B(t_k,0) \ge t_k - p_B(t_j,0)$ (represented by the blue edge in \Cref{fig:bad-beta-2}) and so $p_B(t_k,0) \le p_B(t_j,0)$. Recall that we have $p_B(t_j,0) = t_j$ and therefore $t_j \le p_B(t_k,0) \le p_B(t_j,0) = t_j$, so $p_B(t_k,0) = t_j$. Now take $t_j = 0$ and we see that payments in a profile in which trade occurs are identically zero. This is precisely the issue that leads to unbounded subsidies.
    
    Each possible labelling, even when only considering two types $t_j$ and $t_k$, falls into one of the four above mentioned cases. Thus the only valid labellings that are not incompatible with IR and efficiency lead to exactly the same payments as we get with a single-line mechanism. Hence subsidies remain unbounded for arbitrary labellings for any efficient, IR, BNOM mechanism.
\end{proof}

\subsection{Characterising IR, WBB, NOM bilateral trade mechanisms}

We close our analysis of NOM bilateral trade mechanisms by providing a full characterisation of the class of IR, WBB, NOM mechanisms for this setting.
We introduce some notation to conveniently formulate our characterisation.
For a number $x$ we denote by $\text{succ}_B(x)$ and $\text{prec}_B(x)$ respectively the least element in $D_B$ strictly greater than $x$ and the greatest element strictly less than $x$.
We use $\text{succ}_S(y)$ and $\text{prec}_S(y)$ analogously for the seller's domain $D_S$.
Let $H_B$ denote the least element in $D_B$ that is greater than every element in $D_S$, and $L_B$ the greatest element in $D_B$ that is less than every element in $D_S$.
For a mechanism we may partition the type space into three sets: $M_0^B$, $M_1^B$, and $M_{01}^B$.
$M_0^B$ consists of all bids of the buyer for which trade is guaranteed not to happen, $M_1^B$ consists of all bids for which trade surely happens, and $M_{01}^B$ consists of the remaining elements of $D_B$, where trade depends on the seller's bid.
The sets $M_0^S$, $M_1^S$, and $M_{01}^S$ can be defined analogously for the seller.

For a mechanism $M$ define the \emph{utility interval} $I_B(M,v,x)$ for the buyer as the two numbers $(\ell,h)$ such that $\ell = \min \{ u_B(M(x,y)) \, : \, y \in D_S \}$ and $h = \max \{ u_B(M(x,y)) \, : \, y \in D_S \}$, and define $I_S(M,c,y)$ analogously for the seller.
The NOM property imposes that $I_B(M,v,v) \ge I_B(M,v,x)$ for all $v,x \in D_B$ and that $I_S(M,c,c) \ge I_S(M,c,y)$ for all $c,y \in D_S$.

\begin{theorem}
    \label{thm:trading-windows}
    A deterministic bilateral trade mechanism $M$ for domains $D_B,D_S$ is IR, WBB, and NOM if and only if there are thresholds $p_B^{\min},p_B^{\max},p_S^{\min},p_S^{\max}$, where $p_B^{\min} \le p_B^{\max}$ and $p_S^{\min} \le p_S^{\max}$, such that:
    \begin{enumerate}
        \item When trade doesn't occur the buyer's and the seller's price are both zero. When trade occurs the buyer's price always exceeds the seller's price, both prices are non-negative, the buyer's price is less than their valuation, and the seller's price is at least the buyer's valuation.
        
        \item The set $M_0^B$ contains all types in $D_B$ less than $p_B^{\min}$, the set $M_1^B$ contains all types in $D_B$ greater than $p_B^{\max}$, and $M_{01}^B$ contains all types in $D_B$ in between $p_B^{\min}$ and $p_B^{\max}$. If $p_B^{\min}$ is itself in $D_B$ then $p_B^{\min}$ is in either $M_0^B$ or $M_{01}^B$. Similarly if $p_B^{\max}$ is in $D_B$ then $p_B^{\max}$ is in either $M_1^B$ or $M_{01}^B$.
        
        \item For every type $x \in M_{01}^B \cup M_1^B$ the buyer's payment in outcome $M(x,y)$ is at least $p_B^{\min}$ for all $y \in D_S$, and there exists some $y \in D_S$ such that in outcome $M(x,y)$ trade occurs and the buyer pays $p_B^{\min}$. For every type $x \in M_1^B$ the buyer's payment in outcome $M(x,y)$ is at most $p_B^{\max}$ for all $y \in D_S$, and there exists a $y \in D_S$ such that in outcome $M(x,y)$ trade occurs at price $p_B^{\max}$.
        
        \item (Analogous to point 2 but for the seller) The set $M_0^S$ includes all types in $D_S$ greater than $p_S^{\max}$, the set $M_1^S$ includes all types in $D_S$ less than $p_S^{\min}$, and the set $M_{01}^S$ consists of all types in $D_S$ in between $p_S^{\min}$ and $p_S^{\max}$. If $p_S^{\min}$ itself is in $D_S$, then $p_S^{\min}$ is either in $M_1^S$ or $M_{01}^S$. Similarly, if $p_B^{\max}$ is in $D_S$ then $p_B^{\max}$ is in $M_{01}^S$ or $M_1^S$.
        
        \item (Analogous to point 3 but for the buyer) For every type $y \in M_{01}^S \cup M_1^S$ there is some $x \in D_B$ such that in outcome $M(x,y)$ trade occurs with seller's price $p_S^{\max}$. For every type $y \in M_1^S$ there is furthermore a $x \in D_B$ such that in outcome $M(x,y)$ trade occurs at price $p_S^{\min}$.
    \end{enumerate}
\end{theorem}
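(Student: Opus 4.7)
The proof splits into two directions: conditions 1--5 are necessary, and they are also sufficient. In both directions the buyer and seller can be treated essentially symmetrically as single-parameter agents, and I would therefore freely invoke the machinery of \Cref{sec:single-parameter}, in particular the overlapping characterisation of \Cref{thm:overlapping}.

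For the forward direction (necessity), I would first derive condition~1. If $f(x,y)=0$, then IR for the buyer combined with $p_B \ge 0$ gives $p_B(x,y)=0$, and WBB together with $p_S \ge 0$ gives $p_S(x,y)=0$; the remaining inequalities in condition~1 are restatements of IR for the buyer and the seller and of WBB in the trading case. Next I would apply \Cref{thm:overlapping} to the buyer viewed as a single-parameter agent, concluding that $x\mapsto f(x,\beta_x)$ and $x\mapsto f(x,\omega_x)$ are non-decreasing in $x$ and hence, since $f\in\{0,1\}$, are $0/1$ step functions. I would then define $p_B^{\min}$ as the buyer's best-case trading price and $p_B^{\max}$ as her worst-case trading price, arguing that each is a common value across all types for which trade is possible: if $v,v'\in M_{01}^B\cup M_1^B$ had distinct best-case trading prices $p^*(v)<p^*(v')$, then $v'$ could obviously manipulate by bidding $v$, contradicting BNOM, and the symmetric argument using WNOM pins down $p_B^{\max}$. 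Conditions~2 and~3 for the buyer then follow: the set partition emerges from the step-function structure of $f$ along $\beta$ and $\omega$, and the boundary placements ($p_B^{\min}$ in $M_0^B$ or $M_{01}^B$, and $p_B^{\max}$ in $M_1^B$ or $M_{01}^B$) are forced by IR applied at that exact bid. Conditions~4 and~5 on the seller follow by the same argument with monotonicity reversed, since the seller incurs a cost.

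For the backward direction (sufficiency), I would fix a mechanism $M$ satisfying conditions~1--5 and verify the three desired properties. IR and WBB are immediate from condition~1. To verify NOM, I compute the utility interval $I_B(M,v,v)$ for the buyer: its upper endpoint is $v-p_B^{\min}$ when $v\in M_{01}^B\cup M_1^B$ (using condition~3) and $0$ when $v\in M_0^B$, while its lower endpoint is $v-p_B^{\max}$ when $v\in M_1^B$ and $0$ otherwise. Because $p_B^{\min}$ and $p_B^{\max}$ are the same thresholds for every bid $x$, the interval $I_B(M,v,x)$ for any deviation $x$ is dominated coordinatewise by $I_B(M,v,v)$; condition~2 guarantees that deviations cannot push the buyer into a more favourable trade regime. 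BNOM and WNOM for the buyer follow, and the seller side is handled symmetrically using conditions~4 and~5.

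The main obstacle is the forward direction's cross-type argument, namely showing that the best-case trading price and the worst-case trading price are literally common across all trading types rather than merely monotone in the bid. This common-value property is the substantive NOM constraint; once it is established, the threshold partition falls out from \Cref{thm:overlapping} together with IR at the boundary bids, but the boundary analysis still requires a careful case split accommodating the possibility that the threshold values $p_B^{\min},p_B^{\max},p_S^{\min},p_S^{\max}$ lie outside the respective domains $D_B$ and $D_S$.
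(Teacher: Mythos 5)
Your skeleton (necessity via IR/WBB for point 1, a cross-type exchange argument for the common prices, boundary analysis; sufficiency via coordinatewise comparison of utility intervals) matches the paper's, and your BNOM exchange argument pinning down $p_B^{\min}$ on $M_{01}^B\cup M_1^B$ is exactly the paper's. However, your ``symmetric argument using WNOM'' proves a false statement: the worst-case trading price is \emph{not} common across all of $M_{01}^B\cup M_1^B$, and the theorem (point 3) deliberately asserts the $p_B^{\max}$ properties only for $x\in M_1^B$. The symmetry breaks because for $x\in M_{01}^B$ some seller bid yields no trade, and IR makes every truthful trading utility non-negative, so the lower endpoint of $I_B(M,x,x)$ is $0$, attained at a no-trade profile; the maximum trading price $P(x)$ at such a bid never enters a binding WNOM constraint (for a deviating type $x'\in M_1^B$ the lower endpoint of $I_B(M,x',x)$ is $\min(0,\,x'-P(x))\le 0\le x'-p_B^{\max}$), so $P(x)$ may vary with $x$ anywhere in $[p_B^{\min},x]$ across $M_{01}^B$. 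The exchange argument for the maximum price must be run only between pairs $x,x'\in M_1^B$, where trade is guaranteed and the lower endpoints are exactly $x-P(x)$ and $x-P(x')$; that is what the paper does.

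The second gap is your appeal to \Cref{thm:overlapping} for the partition structure (point 2). Label monotonicity controls the allocation only at the labelled profiles, whereas membership in $M_0^B,M_{01}^B,M_1^B$ is determined by the allocation at \emph{all} profiles; bridging the two requires price information via IR, not allocations alone (a type in $M_{01}^B$ whose trades all price at exactly its bid has a no-trade best-case label, so the best-case allocation curve does not read off the partition). Worse, for the worst case the paper's label monotonicity constrains the off-diagonal labels, i.e.\ $f(t_j,\omega_{kj})$ against $f(t_k,\omega_{jk})$, not the diagonal curve $x\mapsto f(x,\omega_{xx})$ that your step-function argument needs; the two coincide only for single-line labellings, which you cannot assume for an arbitrary given NOM mechanism. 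The paper sidesteps all of this with two short direct arguments: if some $x<x_0^{\max}:=\max M_0^B$ traded at some $y$, then by IR the buyer's price there is at most $x<x_0^{\max}$, so type $x_0^{\max}$ deviating to $x$ has positive best-case utility against a truthful best case of $0$, contradicting BNOM; dually, if some $x>x_1^{\min}:=\min M_1^B$ were outside $M_1^B$, its truthful worst case would be $0$, while deviating to $x_1^{\min}$ guarantees trade at price at most $x_1^{\min}<x$ by IR, giving a positive worst case and contradicting WNOM. Replacing your single-parameter appeal with these arguments, and restricting the $p_B^{\max}$ claim to $M_1^B$, the rest of your outline (point 1, the $p_B^{\min}$ argument, and the sufficiency case analysis) goes through essentially as in the paper.
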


\begin{proof}
    ($\implies$) Assume that $M$ is IR, WBB, and NOM. Point 1 follows directly from IR and WBB. We will prove points 2 and 3, and since the proofs of points 4 and 5 follow analogous reasoning they are omitted. We assume for simplicity of exposition that $M_0^B$, $M_{01}^B$, and $M_1^B$ are all non-empty. In case one or more of these sets are empty, the proof reduces to a simpler one. 
    
    Let $x_0^{\max}$ be the greatest type in $M_0^B$ and let $x$ be an arbitrary type less than $x_0^{\max}$. We show that $x$ is in $M_0^B$. Suppose the contrary and let $y$ be a seller's type such that trade occurs in outcome $M(x,y)$. The buyer's price in $M(x,y)$ is at most $x$ by IR, and hence it is less than $x_0^{\max}$. This means that the second element of $I_B(M,x_0^{\max},x)$ is positive, while $I_B(M,x_0^{\max},x_0^{\max}) = (0,0)$; a contradiction to NOM. We have now established that $x$ is in $M_0^B$ for all $x$ in $D_B$ where $x \le x_0^{\max}$.
    
    Let $x_1^{\min}$ be the least type in $M_1^B$ and let $x$ be an arbitrary type greater than $x_1^{\min}$. We show that $x$ is in $M_1^B$. Suppose the contrary; then the first element of $I_B(M,x,x)$ is 0. However note that any buyer's price charged when reporting $x_1^{\min}$ is less than $x_1^{\min}$ itself by IR, and hence is less than $x$. This would mean that the first element of $I_B(M,x,x_1^{\min})$ is positive; a contradiction to NOM. We have now established that $x$ is in $M_1^B$ for all $x$ in $D_B$ where $x \ge x_1^{\min}$.

    The above altogether shows that $M_0^B$, $M_{01}^B$, $M_1^B$ partition $D_B$ into three contiguous ranges: $M_0^B$ consists of all valuations in $D_B$ up to and including $p_0^{\max}$, $M_{01}^B$ consists of all valuations in $D_B$ strictly in between $p_0^{\max}$ and $p_1^{\min}$, and $M_1^B$ consists of all valuations in $D_B$ from $p_1^{\min}$ and up.
    
    Let $x$ be a type in $M_{01}^B \cup M_1^B$, and let $y$ be the seller's bid such that the buyer's price is minimised in outcome $M(x,y)$. Denote this price by $p(x)$.  Thus we have that the second element of $I_B(M,x,x)$ is $x - p(x)$. By NOM, for any other bid $x'$ we have that the second element of $I_B(M,x,x')$ is at most $x - p(x)$. In particular for $x'$ in $M_{01}^B \cup M_1^B$ it holds that the second element of $I_B(M,x',x')$ is $x' - p(x')$ and thus the second element of $I_B(M,x,x')$ is $x - p(x')$. Thus, $x - p(x') \le x - p(x)$, or equivalently $p(x) \le p(x')$. Since this inequality holds for all $x,x'$ in $M_{01}^B \cup M_1^B$, we have that $p(\cdot)$ is constant over $M_{01}^B \cup M_1^B$, and we define this constant as $p_B^{\min}$. Observe that $p_B^{\min}$ is at most the least type in $M_{01}^B$, by IR (i.e., the type $\text{succ}_B(x_0^{\max}))$. Furthermore $p_B^{\min} \ge x_0^{\max}$ because otherwise the second element of $I_B(M, x_0^{\max}, \text{succ}_B(x_0^{\max}))$ would equal $x_0^{\max} - p_B^{\min} \ge x_0^{\max} - x_0^{\max} = 0$, where 0 is the second element of $I_B(M,x_0^{\max},x_0^{\max})$; a contradiction to NOM. This establishes that the value $p_B^{\min}$ lies in the interval $[x_0^{\max}, \text{succ}_B(x_0^{\max})]$, as required from points 2 and 3 in the claim.
    
    Let $x$ now be a type in $M_1^B$, and let $y$ be the seller's bid such that the buyer's price is maximised in outcome $M(x,y)$. Denote this price by $P(x)$. Thus, the first element of $I_B(M,x,x)$ is $x - P(x)$, for all $x \in M_1^B$. Also, this implies that the first element of $I_B(M,x,x')$ is $x - P(x')$ for all $x,x' \in M_1^B$, which in turn yields that $P(x) \le P(x')$ for all $x,x' \in M_1^B$, and hence $P(\cdot)$ is constant on $M_1^B$. Let this constant be the value of $p_B^{\max}$.

    Note that $p_B^{\max}$ is at most $x_1^{\min}$, by IR. Furthermore, $p_B^{\max} \ge \text{prec}_B(x_1^{\min})$, as otherwise the first element of $I_B(M, \text{prec}_B(x_1^{\min}), x_1^{\min})$ would be positive. The first element of $I_B(M, \text{prec}_B(x_1^{\min}), \text{prec}_B(x_1^{\min}))$ is 0; contradicting with NOM. This proves that points 2 and 3 of the claim hold.
    
    ($\impliedby$) Assume that the 5 points of the claim hold for a mechanism $M$. We show that M is IR, WBB, and NOM. The first two properties, IR and WBB, follow trivially from point 1. We next show that NOM holds for the buyer. The proof for the NOM-property for the seller is analogous and is therefore omitted.

    Let $x$ be a bid in $M_0^B$ so that $I_B(M,x,x) = (0,0)$. Obviously, $I_B(M,x,x') = (0,0)$ for all $x' \in M_0^B$, as desired. For $x' \in M_{01}^B$ we know that the first element of $I_B(M,x,x')$ is 0 by definition of $M_{01}^B$. The second element of $I_B(M,x,x')$ is $x - p_B^{\min}$ by point 3, and $p_B^{\min}$ is at least $x$ by point 2, thus $x - p_B^{\min}$ is non-positive and $I_B(M,x,x) \ge I_B(M,x,x')$ for all $x' \in M_{01}^B$, as desired. For $x' \in M_1^B$ we have $I_B(M,x,x') = (x - p_B^{\max}, x - p_B^{\min})$, and both these expressions are negative as $p_B^{\max}$ and $p_B^{\min}$ are both at least $x$. Hence $I_B(M,x,x) \ge I_B(M,x,x')$ for all $x' \in M_1^B$, as desired. This establishes that $I_B(M,x,x) \ge I_B(M,x,x')$ for all $x' \in D_B$ and $x \in M_0^B$.

    Next, let $x$ be a bid in $M_{01}^B$, so that $I_B(M,x,x) = (0,x - p_B^{\min})$. For $x' \in M_0^B$ we have that $I_B(M,x,x') = (0,0) \le I_B(M,x,x)$. For $x' \in M_{01}^B$ we have that $I_B(M,x,x) = I_B(M,x,x')$ by definition of $M_{01}^B$ and the properties of $p_B^{\min}$ stated under point 3 of the claim. For $x' \in M_1^B$ we have $I_B(M,x,x') = (x - p_B^{\max}, x - p_B^{\min})$. Since $p_B^{\max} \ge x$ the expression $x - p_B^{\max}$ is non-positive, so also in this case we have $I_B(M,x,x) \ge I_B(M,x,x')$. This establishes that $I_B(M,x,x) \ge I_B(M,x,x')$ for all $x' \in D_B$ and $x \in M_{01}^B$.

    Lastly let $x$ be a bid in $M_1^B$, so that $I_B(M,x,x) = (x - p_B^{\max}, x - p_B^{\min})$.  For any bid $x' \in M_0^B \cup M_{01}^B$ we have that trade may not occur, so that the first element of $I_B(M,x,x')$ is 0, which is less than or equal to $x - p_B^{\max}$. If $x' \in M_0^B$ then also the second element of $I_B(M,x,x')$ is 0, again less than $x - p_B^{\min}$. If $x' \in M_{01}^B$ then the second element of $I_B(M,x,x')$ is $x - p_B^{\min}$, equal to the second element of $I_B(M,x,x)$. Thus $I_B(M,x,x) \ge I(M,x,x')$ for all $x' \in M_0^B \cup M_{01}^B$. For a bid $x' \in M_1^B$ it simply holds that $I_B(M,x,x') = (x - p_B^{\max}, x - p_B^{\min}) = I_B(M,x,x)$, by the properties of $p_B^{\max}$ and $p_B^{\min}$ in point 3 of the claim. This establishes that $I_B(M,x,x) \ge I_B(M,x,x')$ for all $x' \in D_B$ and $x \in M_1^B$. Altogether this yields that $I_B(M,x,x) \ge I_B(M,x,x')$ for all $x,x' \in D_B$, i.e., $M$ is NOM for the buyer.
\end{proof}

\section{Conclusions}


This paper introduces a framework for designing NOM mechanisms with payments.
We formalise the added flexibility of such mechanisms in the form of labellings which allows us to characterise the set of implementable allocation functions.
We apply our analysis to bilateral trade and show a surprising dichotomy between BNOM and WNOM with regards to budget balance -- we provide an efficient, IR, $\alpha$-WBB WNOM mechanism and show that no BNOM mechanism exists with bounded $\alpha$.
In light of our results we believe that NOM mechanisms deserve to be better understood, and reasoning about the labellings inherent to such mechanisms appears to be a useful way to do so.
There are a number of settings in the mechanism design literature to which our framework could be applied next not covered in \cite{Troyan2020}, including, for example, procurement auctions, combinatorial auctions, and two-sided markets.
Following \cite{Aziz2020} it would of course be useful to explore computational aspects to these mechanisms and further how the two sides of NOM relate to one another.
Our general characterisation can also be helpful in better understanding multidimensional domains, for which strategyproofness is known be restrictive \cite{Koutsoupias2021}.

\bibliographystyle{apalike} 
\bibliography{references}

\end{document}